\newcommand{\FF}{\mathbb{F}}
\newcommand{\MM}{\mathbb{M}}
\newcommand{\ov}{\mathbf{o}}
\newcommand{\sv}{\mathbf{s}}
\newcommand{\xv}{\mathbf{x}}
\newcommand{\yv}{\mathbf{y}}
\newcommand{\hsrc}{{\rm HSRC}}
\newcommand{\nmax}{n_{max}}
\newcommand{\probup}{p_{node}}
\newcommand{\objup}{p_{obj}}
\newcommand{\divsrc}{\delta} % Number of distinct and mutually exclusive pairs of fragments which can be used to reconstruct a specific fragment. Depends on the diversity of the symmetric subset code.
\newtheorem{ex}{Example}
\newtheorem{lem}{Lemma}
\newtheorem{cor}{Corollary}
\newtheorem{defn}{Definition}
\begin{document}
\title{Homomorphic Self-repairing Codes for\\Agile Maintenance of Distributed Storage Systems}
\author{Fr\'ed\'erique Oggier and Anwitaman Datta
\thanks{F. Oggier is with Division of Mathematical Sciences,
School of Physical and Mathematical Sciences, Nanyang Technological University,
Singapore. A. Datta is with Division of Computer Science, School of Computer Engineering,
Nanyang Technological University, Singapore.
Email:\{frederique,anwitaman\}@ntu.edu.sg.
Part of this work appeared at Infocom 2011 \cite{OD11}.
}
}
\maketitle

%*************************************************************************%
%
% ABSTRACT
%
%************************************************************************%

\begin{abstract}
Distributed data storage systems are essential to deal with the need to store massive volumes of data. In order to make such a system fault-tolerant, some form of redundancy becomes crucial. There are various overheads that are incurred due to such redundancy - most prominent ones being overheads in terms of storage space and maintenance bandwidth requirements. Erasure codes provide a storage efficient alternative to replication based redundancy in storage systems. They however entail high communication overhead for maintenance in a networked setting, when some of the encoded fragments are lost due to failure of storage devices and need to be replenished in new ones. Such overheads arise from the fundamental need in storage systems to recreate (or keep separately) first a copy of the whole object before any individual encoded fragment can be generated and replenished. Traditional erasure codes, originally designed for communication over lossy channels, are optimized for recreation of the original message (object), but not for regeneration of individual lost encoded parts. We propose as an alternative a new family of erasure codes called \emph{self-repairing codes} (SRC) taking into account the peculiarities of distributed storage systems, specifically to improve the maintenance process. SRC has the following salient features: (a) encoded fragments can be repaired directly from other subsets of encoded fragments by downloading less data than the size of the complete object, ensuring that (b) a fragment is repaired from a fixed number of encoded fragments, the number depending only on how many encoded blocks are missing and independent of which specific blocks are missing. This paper lays the foundations by defining the novel self-repairing codes, elaborating why the defined characteristics are desirable for distributed storage systems. Then a concrete family of such code, namely, homomorphic self-repairing codes (HSRC) are proposed and various aspects and properties of the same are studied in detail and compared - quantitatively or qualitatively (as may be suitable) with respect to other codes including traditional erasure codes as well as other recent codes designed specifically for storage applications.

\end{abstract}
\begin{keywords}
 coding, networked storage, self-repair
\end{keywords}
%*************************************************************************%
%
% INTRODUCTION
%
%*************************************************************************%
\section{Introduction}

Various genres of networked storage systems, such as decentralized peer-to-peer storage systems, as well as dedicated infrastructure based data-centers and storage area networks, have gained prominence in recent years. Because of storage node failures, or user attrition in a peer-to-peer system, redundancy is essential in networked storage systems. This redundancy can be achieved using either replication, or (erasure) coding techniques, or a mix of the two. Erasure codes require an object to be split into $k$ parts, and mapped into $n$ encoded fragments, such that any $k$ encoded fragments are adequate to reconstruct the original object. Such coding techniques play a prominent role in providing storage efficient redundancy, and are particularly effective for storing large data objects and for archival and data back-up applications (for example, CleverSafe \cite{cleversafe}, Wuala \cite{wuala}).

Redundancy is lost over time because of various reasons such as node failures or attrition, and mechanisms to maintain redundancy are essential. It was observed in \cite{Liskov} that while erasure codes are efficient in terms of storage overhead, maintenance of lost redundancy entail relatively huge overheads. A naive approach to replace a single missing fragment will require that $k$ encoded fragments are first fetched in order to create the original object, from which the missing fragment is recreated and replenished. This essentially means that for every lost fragment, $k$-fold more network traffic is incurred.

Several engineering solutions can partly mitigate the high maintenance overheads. One approach is to use a `hybrid' strategy, where a full replica of the object is additionally maintained \cite{Liskov}. This ensures that the amount of network traffic equals the amount of lost data.\footnote{In this paper, we use the terms `fragment' and `block' interchangeably. Depending on the context, the term `data' is used to mean either fragment(s) or object(s).} A spate of recent works \cite{netcod,hierarchical} argue that the hybrid strategy adds storage inefficiency and system complexity, besides being a single point of bottleneck. Another possibility is to apply lazy maintenance \cite{TotalRecall,dattaP2P}, whereby maintenance is delayed in order to amortize the maintenance of several missing fragments. Lazy strategies additionally avoid maintenance due to temporary failures. Procrastinating repairs however may lead to a situation where the system becomes vulnerable, for example to bursty/correlated failures, and thus may require a much larger amount of redundancy to start with. Furthermore, the maintenance operations may lead to spikes in network resource usage \cite{dattaSSS}.

It is worth highlighting at this juncture that erasure codes had originally been designed in order to make communication robust, such that loss of some packets over a communication channel may be tolerated. Network storage has thus benefitted from the research done in coding over communication channels by using erasure codes as black boxes that provide efficient distribution and reconstruction of the stored objects. Networked storage however involves different challenges but also opportunities not addressed by classical erasure codes. Recently, there has thus been a renewed interest \cite{netcod,UCBsubm,hierarchical,biersackRGC,vijaykumarallerton,KLS,Shum-ICC} in designing codes that are optimized to deal with the vagaries of networked storage, particularly focusing on the maintenance issue. In a volatile network where nodes may fail, or come online and go offline frequently, new nodes must be provided with fragments of the stored data to compensate for the departure of nodes from the system, and replenish the level of redundancy (in order to tolerate further faults in future). In this paper, we propose a new family of codes called
\emph{self-repairing codes} (SRC), which are tailored to fit well typical networked storage environments.

As any linear $(n,k,d)$ erasure code over a $q$-ary alphabet, a SRC is formally a linear
map $c:\FF_{q^k}\rightarrow\FF_{q^n},~\sv\mapsto c(\sv)$
which maps a $k$-dimensional vector $\sv$ to an $n$-dimensional vector
$c(\sv)$. The set $C$ of codewords $c(\sv)$, $\sv\in\FF_{q^k}$,
forms the code (or codebook). The third parameter $d$ refers to the minimum
distance of the code: $d=\min_{\xv\neq\yv \in C} d(\xv,\yv)$ where the Hamming distance $d(\xv,\yv)$ counts the number of positions at which the coefficients of $\xv$ and $\yv$ differ. The minimum distance describes how many erasures can
be tolerated, which is known to be at most $n-k$, achieved by maximum distance
separable (MDS) codes. MDS codes thus allow to recover any codeword out of $k$
coefficients. Though SRCs are not MDS codes, their definition mimics the MDS property in terms of repair, namely,
we define the \emph{concept of self-repairing codes} as
$(n,k)$ codes designed to suit networked storage systems, that encode $k$
fragments of an object into $n$ encoded fragments to be stored at $n$ nodes,
with the properties that:\\
(a) \emph{encoded fragments can be
repaired directly from other subsets of encoded fragments by
downloading less data than the size of the complete object}. \\
More precisely, based on the analogy with the error correction capability of erasure codes, which is of any $n-k$ losses independently of which losses,\\ (b) \emph{a fragment can be repaired from a fixed number of encoded fragments, the number depending only on how many encoded blocks are missing and independently of which specific blocks are missing.}

Two families of SRC are known up to date \cite{OD11,OD2-11}.

To do so, SRCs naturally require more storage overhead than erasure codes for equivalent fault tolerance (static resilience). We will see more precisely later on that there is a tradeoff between the ability to self-repair
and this extra storage overhead: SRC could be tuned to be MDS at the price of losing the self-repair property, and conversely, the facility to self-repair can be adapted based on the amount of extra redundancy introduced. Consequently, SRCs can recreate the whole object
with $k$ fragments, though unlike for erasure codes, these are not arbitrary
$k$ fragments, however, many such $k$ combinations can be found (see Section
\ref{sec:static} for more details).

Note that even for traditional erasure codes, the property (a) may
coincidentally be satisfied, but in absence of a systematic mechanism this
serendipity cannot be leveraged. In that respect, hierarchical codes (HCs) \cite{hierarchical}
may be viewed as a way to do so, and are thus the closest example of
construction we have found in the literature, though they do not give any
guarantee on the number of blocks needed to repair given the number of losses,
i.e., property (b) is not satisfied, and has no deterministic guarantee for
achieving property (a) either. We may say that in spirit, SRC is closest to
hierarchical codes - at a very high level, SRC design features mitigate the
drawbacks of HCs.

While motivated by the same problem as regenerating codes (RGC) and HCs, that of efficient
maintenance of lost redundancy in coding based distributed storage systems,
the approach of self-repairing codes (SRC) tries to do so at a somewhat
different point of the design space. We try to minimize the number of nodes
necessary to reduce the reconstruction of a missing block, which automatically
translates into lower bandwidth consumption, but also lower computational
complexity of maintenance, as well as the possibility for
faster and parallel replenishment of lost redundancy. Thus SRCs allow \emph{light weight} (in terms of communication and computation overhead) and \emph{flexible} (in terms of flexibility in the number of options to carry out specific repairs, which in turn allow parallel and fast repairs), that is, \emph{agile} maintenance, of networked storage systems.

In this work, we make the following \emph{contributions}:\\
(i) We propose a new family of codes, self-repairing codes (SRC), designed specifically as an alternative to erasure codes (EC) for providing redundancy in networked storage systems, which allow repair of individual encoded blocks using only few other encoded blocks. Like ECs, SRCs also allow recovery of the whole object using $k$ encoded fragments, but unlike in ECs, these are not any arbitrary $k$ fragments. However, numerous specific suitable combinations exist.\\
(ii) We provide a deterministic code construction called \emph{Homomorphic Self-Repairing Code} (HSRC), showcasing that SRC codes can indeed be realized.\\
(iii) HSRC self-repair operations are computationally efficient. It is done by XORing encoded blocks, each of them containing information about all fragments of the object, though the encoding itself is done through polynomial evaluation (similar to popular ECs such as Reed-Solomon \cite{ReedSolomon} codes), not by XORing.\\
(iv) We show that for equivalent static resilience, marginally more storage is needed than traditional erasure codes to achieve self-repairing property.\\
(v) The need of few blocks to reconstruct a lost block naturally translates to low overall bandwidth consumption for repair operations. SRCs allow for both eager as well as lazy repair strategies for equivalent overall bandwidth consumption for a wide range of practical system parameter choices. They also outperform lazy repair with the use of traditional erasure codes for many practical parameter choices.\\
(vi) We show that by allowing parallel and independent repair of different encoded blocks, SRCs facilitate fast replenishment of lost redundancy, allowing a much quicker system recovery from a vulnerable state than is possible with traditional codes. This also implies a distribution of the repair related tasks across different nodes, thus avoiding bottlenecks or overloading any specific node.

%************************************************************************%
%
% RELATED WORK
%
%************************************************************************%
\section{Related work}

In \cite{netcod,UCBsubm}, Dimakis et al. propose regenerating codes (RGC) by
exposing the need of being able to reconstruct an erased encoded block from a
smaller amount of data than would be needed to first reconstruct the whole
object. They however do not address the problem of building new codes that
would solve the issue, but instead use classical erasure codes as a black box
over a network which implements random linear network coding and propose leveraging the properties of network
coding to improve the maintenance of the stored data. Network information flow based analysis shows the possibility to replace a missing fragment using network traffic equalling the volume of lost data. Unfortunately, it is possible to achieve this optimal limit only by communicating with all the $n-1$ remaining blocks. Consequently, to the best of our knowledge, regenerating codes literature generally does not discuss how it compares with engineering solutions like lazy repair, which amortizes the repair cost by initiating repairs only when several fragments are lost. Furthermore, for RGCs to work, even sub-optimally, it is essential to communicate with at least $k$ other nodes to reconstruct any missing fragment. Thus, while the volume of data-transfer for maintenance is lowered, RGCs are expected to have higher protocol overheads, implementation and computational complexity. For instance, it is noted in \cite{biersackRGC} that a randomized linear coding based realization of RGCs takes an order of magnitude more computation time than standard erasure codes for both encoding and decoding.
The work of \cite{vijaykumarallerton} improves on the original RGC papers in that instead of arguing the existence of regenerating codes via deterministic network coding algorithms, they provide explicit network code constructions.
Recently, collaborative RGC were introduced independently \cite{KLS,Shum-ICC}, where it was shown that collaboration among
new nodes joining the network and participating to the repair process can improve on traditional RGC, in terms of both
(i) storage-bandwidth trade-off, that is the amount of data that is stored at each node with respect to that which is downloaded by new nodes during repair, and (ii) number of simultaneous failures tolerated. While such analysis determines constraint on achievability, for classical as well as collaborative RGC, code constructions for collaborative RGC are even sparser, up to date, only one construction has been given in \cite{KLS}, which furthermore corresponds to the same repair
cost as erasure codes.

In \cite{hierarchical}, the authors make the simple observation that encoding two bits into three by XORing the two information bits has the property that any two encoded bits can be used to recover the third one. They then propose an iterative construction where, starting from small erasure codes, a bigger code, called hierarchical code (HC), is built by XORing subblocks made by erasure codes or combinations of them. Thus a subset of encoded blocks is typically enough to regenerate a missing one. However, the size of this subset can vary, from the minimal to the maximal number of encoded subblocks, determined by not only the number of lost blocks, but also the specific lost blocks. So given some lost encoded blocks, this strategy may need an arbitrary number of other encoded blocks to repair. Pyramid codes \cite{pyramid} explore similar ideas.

%*************************************************************************%
%
% HOMOMORPHIC CODES
%
%*************************************************************************%
\section{Homomorphic Self-Repairing Codes}

In what follows, we denote finite fields by $\FF$, and finite fields without the zero
element by $\FF^*$. The cardinality of $\FF$ is given by its index, that is, $\FF_2$ is
the binary field with two elements, which is nothing else than the two bits 0 and 1, with
addition and multiplication modulo 2, $\FF_q$ is the finite field with $q$ elements, and
$\FF_Q$ is the finite field with $Q$ elements.
If $q=2^t$, $Q=q^m=2^{tm}$, for some positive integers $m$ and $t$, an element $\xv\in\FF_Q$ can be represented by an $m$-dimensional vector $\xv=(\xv_1,\ldots,\xv_m)$ where $\xv_i\in\FF_q$, $i=1,\ldots,m$, by fixing a $\FF_q$-basis of $\FF_Q$.
Similarly, each coefficient $\xv_i$ can be written as $\xv_i=(x_{i1},\ldots,x_{it})$, $x_{ij}\in\FF_2$, so that $\xv$ may alternatively be seen as a $tm$-dimensional binary vector
$\xv=(x_{11},\ldots,x_{1t},\ldots,x_{m1},\ldots,x_{mt})$. We say that $\xv$ is a vector of size $m$ to refer to $m$ coefficients in $\FF_q$, so that $q$ determines the unit in which the size $m$ is measured: for example, if $q=2$, $\xv$ is $m$ bit long, if $q=8$, $\xv$ is $m$ bytes long.
To do explicit computations in the finite field $\FF_q$, it is often convenient to use the generator of the multiplicative group $\FF_{q}^*$, that we will denote by $w$.
A generator has the property that $w^{q-1}=1$, and there is no smaller
positive power of $w$ for which this is true.
Examples of finite fields that will be used later on are given in Table \ref{table:ff}.

\begin{table}
\[
\begin{array}{|l|l|ll|}
\hline
\FF_4    &  \FF_8        & \FF_{16} & \\
\hline
 0        &  0           & 0          & w^7 =  w^3 + w +1 \\
 1        &  1           & 1          &  w^8 =  w^2+1  \\
 w        &  w           & w          &  w^9 =  w^3+w \\
 w^2      &  w^2         & w^2        &  w^{10}=w^2+w+1 \\
 =w+1     &  w^3=w+1     & w^3        &   w^{11}=w^3+w^2+w\\
          &  w^4=w^2+w   & w^4=w+1    &  w^{12}=w^3+w^2+w+1\\
          &  w^5=1+w^2+w & w^5=w^2+w   &w^{13}=w^3+w^2+1\\
          &  w^6=w^2+1   & w^6=w^3+w^2 & w^{14}=w^3+1\\
\hline
\end{array}
\]
\caption{The finite fields $\FF_4,\FF_8$ and $\FF_{16}$, where $w$ denotes the generator of their respective
multiplicative groups $\FF_4^*$, $\FF_8^*$ and $\FF_{16}^*$.}
\label{table:ff}
\end{table}

%*************************************************************************%
\subsection{Encoding}

Let $\ov$ be an object of size $M$ to be stored over a network of $n$ nodes, that is $\ov\in\FF_{q^M}$, and let $k$ be a positive integer such that $k$ divides $M$. We can write
\[
\ov=(\ov_1,\ldots,\ov_k),~\ov_i\in\FF_{q^{M/k}}
\]
which requires the use of a $(n,k)$ code over $\FF_{q^{M/k}}$, that
maps $\ov$ to an $Mn/k$-dimensional vector $\xv$, or equivalently, an $n$-dimensional vector
\[
\xv=(\xv_1,\ldots,\xv_n),~\xv_i\in\FF_{q^{M/k}},
\]
after which each $\xv_i$ is given to a node to be stored. The theory developed below assumes this model, which can be adjusted to fit real scenarios as elaborated later in Example \ref{ex:real}.

Since the work of Reed and Solomon \cite{ReedSolomon}, it is known that linear
coding can be done via polynomial evaluation. In short, take an object
$\ov=(\ov_1,\ov_2,\ldots,\ov_k)$ of size $M$, with each $\ov_i$ in $\FF_{q^{M/k}}$,
and create the polynomial
\[
p(X)=\ov_1+\ov_2X+\ldots\ov_kX^{k-1}\in\FF_{q^{M/k}}[X].
\]
Now evaluate $p(X)$ in $n$ elements $\alpha_1,\ldots,\alpha_n \in\FF_{q^{M/k}}^*$,
to get the codeword
\[
(p(\alpha_1),\ldots,p(\alpha_n)),~k<n\leq q^{M/k}-1.
\]
\begin{ex}\label{ex:exRS23} \rm
Suppose that $q=2$, so that the size of the object is measured in bits.
Take the 4 bit long object $\ov=(o_1,o_2,o_3,o_4)$, and create $k=2$ fragments: $\ov_1 = (o_1,o_2)\in \FF_4$,
$\ov_2= (o_3,o_4)\in \FF_4$. We use a $(3,2)$ Reed-Solomon code over
$\FF_4$, to store the file in 3 nodes.
Recall that $\FF_4=\{ (a_0,a_1),~a_0,a_1\in \FF_2\}=
\{ a_0+a_1 w,~a_0,a_1\in \FF_2\}$
where $w^2 = w+1$. Thus we can alternatively represent each fragment as:
$\ov_1= o_1+o_2w\in \FF_4$, $\ov_2= o_3+o_4w\in \FF_4$.
The encoding is done by first mapping the two fragments into a polynomial
$p(X)\in\FF_4[X]$:
\[
p(X)=(o_1+o_2w)+(o_3+o_4w)X,
\]
and then evaluating $p(X)$ into the three non-zero elements of $\FF_4$, to
get a codeword of length 3:
\[
(p(1),p(w),p(w+1))
\]
where $p(1)=o_1+o_3+w(o_2+o_4)$, $p(w)=o_1+o_4+w(o_2+o_3+o_4)$,
$p(w^2)=o_1+o_3+o_4+w(o_2+o_3)$,
so that each node gets two bits to store:
$(o_1+o_3,o_2+o_4)$ at node 1, $(o_1+o_4,o_2+o_3+o_4)$ at node 2,
$(o_1+o_3+o_4,o_2+o_3)$ at node 3.
\end{ex}

\begin{defn}
We call {\em homomorphic self-repairing code}, denoted by $\hsrc(n,k)$, the code
obtained by evaluating the polynomial
\begin{equation}\label{eq:linpol}
p(X)=\sum_{i=0}^{k-1}p_iX^{q^i} \in \FF_{q^{M/k}}[X]
\end{equation}
in $n$ non-zero values $\alpha_1,\ldots,\alpha_n$ of
$\FF_{q^{M/k}}$ to get an $n$-dimensional codeword
\[
(p(\alpha_1),\ldots,p(\alpha_n)),
\]
where $p_i=\ov_{i+1}$, $i=0,\ldots,k-1$ and each $p(\alpha_i)$ is given to node $i$ for storage.
\end{defn}

In particular, we need the code parameters $(n,k)$ to satisfy
\begin{equation}\label{eq:boundn}
k<n\leq q^{M/k}-1.
\end{equation}

The analysis that follows refers to this family of self-repairing codes.

%*************************************************************************%
\subsection{Self-repair}
\label{subsec:src}

Since we work over finite fields that contain $\FF_2$, recall that all
operations are done in characteristic 2, that is, scalar operations are performed modulo 2.
Let $a,b \in \FF_{q^m}$, for some $m\geq 1$ and $q=2^t$. Then we have that
$(a+b)^2=a^2+2ab+b^2=a^2+b^2$ since $2ab\equiv 0 \mod 2$, and consequently
\begin{equation}\label{eq:Frob}
(a+b)^{2^i}=\sum_{j=0}^{2^i} {2^i\choose j} a^j b^{2^i-j}= a^{2^i}+b^{2^i},~i\geq 1.
\end{equation}
%Recall the definition of a linearized polynomial.
\begin{defn}
A {\em linearized polynomial} $p(X)$ over $\FF_Q$, $Q=q^m$, has the form
\[
p(X)=\sum_{i=0}^{k-1}p_iX^{Q^i},~p_i\in\FF_Q.
\]
\end{defn}
More generally, one can consider a polynomial $p(X)$ over $\FF_Q$, $Q=q^m$, of the form:
\[
p(X)=\sum_{i=0}^{k-1}p_iX^{s^i},~p_i\in\FF_Q,
\]
where $s=q^l$, $1\leq l\leq m$ ($l=m$ makes $p(X)$ a linearized polynomial).
These polynomials share the following useful property:
\begin{lem}
Let $a,b \in \FF_{q^m}$ and let $p(X)$ be the polynomial given by
$p(X)=\sum_{i=0}^{k-1}p_iX^{s^i}$, $s=q^l$, $m \geq l \geq 1$.
We have
\[
p(ua+vb)=up(a)+vp(b),~u,v\in \FF_s.
\]
\end{lem}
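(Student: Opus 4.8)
The plan is to expand $p(ua+vb)$ term by term and exploit two facts: that raising to the power $s^i$ is an additive and multiplicative map in characteristic $2$, and that every element of $\FF_s$ is fixed by this map. First I would write
\[
p(ua+vb)=\sum_{i=0}^{k-1}p_i(ua+vb)^{s^i}.
\]
Since $s=q^l=2^{tl}$, each exponent $s^i=2^{tli}$ is itself a power of two, so equation (\ref{eq:Frob}) applies with $2^i$ there replaced by $s^i$ (i.e., with index $tli$). This gives $(ua+vb)^{s^i}=(ua)^{s^i}+(vb)^{s^i}=u^{s^i}a^{s^i}+v^{s^i}b^{s^i}$, where I also use that the map $x\mapsto x^{s^i}$ is multiplicative.

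Next I would invoke the hypothesis $u,v\in\FF_s$. Every element of $\FF_s$ is a root of $X^s-X$, so $u^s=u$, and by iterating, $u^{s^i}=u$ for all $i\geq 0$ (and likewise $v^{s^i}=v$). Substituting, each summand becomes $p_i(ua^{s^i}+vb^{s^i})$. Summing over $i$ and factoring out the scalars $u$ and $v$, which do not depend on $i$, then yields
\[
p(ua+vb)=u\sum_{i=0}^{k-1}p_ia^{s^i}+v\sum_{i=0}^{k-1}p_ib^{s^i}=up(a)+vp(b),
\]
which is the claim.

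The only point needing care is the first step, namely that $x\mapsto x^{s^i}$ is genuinely additive over $\FF_{q^m}$. This is exactly where I would lean on (\ref{eq:Frob}): because $s^i$ is a power of $2$, all binomial coefficients $\binom{s^i}{j}$ vanish modulo $2$ for $0<j<s^i$, precisely as in that displayed identity, so the cross terms disappear. Everything else is routine bookkeeping. Worth emphasizing is that the hypothesis $u,v\in\FF_s$ (rather than the larger field $\FF_{q^m}$) is what makes $u^{s^i}=u$ hold, and is the reason the scalars pass through the exponentiation unchanged; this is the crux that lets the scalars factor back out in the final step.
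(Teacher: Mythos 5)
Your proof is correct and follows essentially the same route as the paper's: expand $p(ua+vb)$, apply the characteristic-$2$ identity (\ref{eq:Frob}) to split $(ua+vb)^{s^i}$, use $u^{s^i}=u$ for $u\in\FF_s$, and factor the scalars out. If anything, you are slightly more explicit than the paper in justifying the iteration $u^{s^i}=u$ from $u^s=u$ and the multiplicativity of $x\mapsto x^{s^i}$, both of which the paper leaves implicit.
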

\begin{IEEEproof}
If we evaluate $p(X)$ in $ua+vb$, we get
\[
p(ua+vb)
= \sum_{i=0}^{k-1}p_i(ua+vb)^{s^i} \\
= \sum_{i=0}^{k-1}p_i((ua)^{s^i}+(vb)^{s^i})
\]
by (\ref{eq:Frob}), and
\[
p(ua+vb)
= \sum_{i=0}^{k-1}p_i(ua^{s^i}+vb^{s^i})\\
= u\sum_{i=0}^{k-1}p_ia^{q^i}+ v\sum_{i=0}^{k-1}p_ib^{q^i}
\]
using the property that $u^s=u$ for $u\in\FF_s$.
\end{IEEEproof}

We now define a weakly linearized polynomial as
\begin{defn}
A {\em weakly linearized polynomial} $p(X)$ over $\FF_Q$, $Q=q^m$, has the form
\[
p(X)=\sum_{i=0}^{k-1}p_iX^{q^i},~p_i\in\FF_Q.
\]
\end{defn}
We chose the name weakly linearized polynomial, since we only retain the $\FF_q$-linearity,
namely:
\begin{cor}\label{lem:sr}
Let $a,b \in \FF_{q^m}$ and let $p(X)$ be a weakly linearized polynomial
given by $p(X)=\sum_{i=0}^{k-1}p_iX^{q^i}$.
We have
\begin{equation}\label{eq:hom}
p(ua+vb)=up(a)+vp(b),~u,v\in\FF_q.
\end{equation}
In particular
\begin{equation}\label{eq:hom2}
p(a+b)=p(a)+p(b).
\end{equation}
\end{cor}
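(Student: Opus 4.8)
The plan is to recognize that this corollary is nothing more than the $l=1$ instance of the preceding Lemma, so the entire argument is a specialization rather than a fresh computation. A weakly linearized polynomial $p(X)=\sum_{i=0}^{k-1}p_iX^{q^i}$ is exactly the polynomial $\sum_{i=0}^{k-1}p_iX^{s^i}$ of the Lemma with $s=q^l$ taken at $l=1$, since then $s=q$ and the exponents $s^i=q^i$ match term by term. The constraint $m\geq l\geq 1$ is satisfied by $l=1$ (as $m\geq 1$), so the Lemma applies verbatim with $\FF_s=\FF_q$.

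First I would invoke the Lemma with this choice of $l$. Its conclusion reads $p(ua+vb)=up(a)+vp(b)$ for $u,v\in\FF_s$, which upon substituting $\FF_s=\FF_q$ is precisely equation (\ref{eq:hom}). To obtain the special case (\ref{eq:hom2}), I would set $u=v=1$, noting that $1\in\FF_q$, which collapses the scalars and yields $p(a+b)=p(a)+p(b)$ immediately. No further manipulation is required, and in particular the Frobenius identity (\ref{eq:Frob}) has already done its work inside the proof of the Lemma.

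The only step deserving comment is the mechanism that makes the full linearity over $\FF_q$ hold: in the Lemma's proof, the scalars $u,v$ are pulled through the map using the identity $u^s=u$ valid for all $u\in\FF_s$. For $s=q$ this is the statement that every element of $\FF_q$ satisfies $u^q=u$, which is the defining property of the field $\FF_q$ as the fixed field of its Frobenius. This is why the homomorphism property is only claimed to be $\FF_q$-linear and not $\FF_Q$-linear, justifying the adjective \emph{weakly} in the terminology. I do not anticipate any genuine obstacle; the content of the corollary is entirely carried by the Lemma, and the corollary exists only to record the case $s=q$ that is actually used by the $\hsrc$ construction, where the coefficients live in $\FF_{q^{M/k}}$ but repair exploits $\FF_q$-linearity.
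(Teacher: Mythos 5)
Your proof is correct and matches the paper's treatment exactly: the paper states this result as a corollary of the preceding Lemma with no separate argument, and your specialization to $s=q$ (i.e.\ $l=1$), followed by setting $u=v=1$ for equation (\ref{eq:hom2}), is precisely the intended derivation. Your remark that $u^q=u$ for $u\in\FF_q$ is what restricts the linearity to $\FF_q$ (hence ``weakly'' linearized) is also the point the paper is making.
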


It is the choice of a weakly linearized polynomial
in (\ref{eq:linpol}) that enables self-repair.

\begin{ex}\label{ex:poly}\rm
Consider the polynomial
\[
p(X)=p_0X+p_1X^2+p_2X^4 \in \FF_8[X].
\]
We have (see Table \ref{table:ff} for $\FF_8$ arithmetic)
\[
p(w+1)=p_0(w+1)+p_1(w^2+1)+p_2(w^4+1)=p(w)+p(1)
\]
however
\[
p(w^2) \neq p(w)^2
\]
since $p_i^2 = p_i$ if and only if $p_i\in \FF_2$.
\end{ex}

A codeword from $\hsrc(n,k)$ is then of the form $(p(\alpha_1),\ldots,p(\alpha_n))$, where
$p(X)$ is a weakly linearized polynomial.
Since $\FF_{q^{M/k}}$ contains a $\FF_q$-basis $B=\{b_1,\ldots,b_{M/k}\}$, the $\alpha_i$,
$i=1,\ldots,n$, can be expressed as $\FF_q$-linear combinations of the basis
elements, and we have from Lemma \ref{lem:sr} that
\[
\alpha_i=\sum_{j=1}^{M/k}\alpha_{ij}b_j,~\alpha_{ij}\in\FF_q
\Rightarrow
p(\alpha_i)=\sum_{j=1}^{M/k}\alpha_{ij}p(b_j).
\]
In words, that means that if $p$ is evaluated in the elements of the basis $B$ (or any other basis), then any encoded fragment $p(\alpha_i)$ can be obtained as a linear combination of other encoded fragments.

{\bf Controlling the amount of self-repaired redundancy.}
The amount of redundancy allowing self-repair introduced in the coding scheme can be controlled through two mechanisms:
\begin{enumerate}
\item
Firstly, given $k$ fragments, there are different values
of $n$, and different choices of $\{\alpha_1,\ldots,\alpha_n\}$
that can be chosen to define a self-repairing code. Let us denote by $\nmax$ the maximum value that $n$ can take, namely $\nmax=q^{M/k}-1$. By choosing
the set of $\alpha_i$ to form a subspace of $\FF_{\nmax}$, we can reduce the redundancy while maintaining a particularly nice symmetric structure of the code.
In the extreme case where $\alpha_1,\ldots,\alpha_n$ are
contained in $B$, the code has no self-repairing property, and is in fact a MDS code. Thus SRC can be tuned to provide the desired amount of redundancy, from MDS and no self-repair, to the maximal amount of self-repair with $\nmax$.
\item
As seen in Lemma \ref{lem:sr}, the power $s$ of $X^{s^i}$ in the weakly linearized polynomial $p(X)$ determines the $\FF_s$-linearity of $p(X)$. Consequently, the bigger $s$, the more redundancy since from (\ref{eq:hom})
\[
p(ua+vb)=up(a)+vp(b),~u,v\in\FF_q,
\]
meaning that the encoded fragment $p(ua+vb)$ can be repaired by contacting two nodes
$p(a),p(b)$ in as many ways as there are ways for writing $ua+vb$, namely $(q-1)^2$:
\[
ua+vb=u(u')^{-1}u'a+v(v')^{-1}v'b,~u',v'\neq 0,~u',v'\in\FF_q.
\]
In the particular case where $s=2$, we obtain from (\ref{eq:hom2}) an XOR-like
structure
\[
p(a+b)=p(a)+p(b).
\]
However, it is worth remarking that though the encoded fragments
can be thus obtained as XORs of each other, each fragment is actually containing
information about all the different fragments, which is very different than
a simple XOR of the data itself. In particular, HSRC is not a systematic code.
The implications of lack of systematic property will be discussed in Subsection \ref{subsec:syst}.
\end{enumerate}

{\bf Computational complexity of self-repair.}
In terms of computational complexity, the case $s=2$ implies that the cost of a block reconstruction is that of some XORs
(one in the most favorable case, when two terms are enough to reconstruct
a block, up to $k-1$ in the worst case), independently of $q$, since if $q=2^t$, the
addition in $\FF_q$ is done by addition modulo 2 componentwise.
The cost increases if one would like to exploit the $\FF_q$-linearity.
Indeed, repairing through
\[
p(ua+vb)=up(a)+vp(b),~u,v\in\FF_q,
\]
further requires two multiplications in $\FF_q$.

%*************************************************************************%
\subsection{Decoding}

That decoding is possible is guaranteed by either Lagrange interpolation, or
by considering a system of linear equations, assuming that
\begin{equation}\label{eq:boundk}
k\leq {M/k},
\end{equation}
as detailed below.

{\bf Lagrange interpolation.}
Given $k$ fragments $p(\alpha_{i_1}),\ldots,p(\alpha_{i_k})$ such that $\alpha_{i_1},\ldots,\alpha_{i_k}$ are linearly independent, the node that wants to reconstruct the file computes $q^k-1$ linear combinations of the $k$ fragments, which
gives, thanks to the homomorphic property (\ref{eq:hom}), $q^k-1$ points in which $p$ is evaluated.
Lagrange interpolation guarantees that it is enough to have $q^{k-1}+1$ points
(which we have, since $q^k-1 \geq q^{k-1}+1$ for $k\geq 2$) to reconstruct
uniquely the polynomial $p$ and thus the object. This requires
\[
q^{k-1}+1 \leq q^{M/k}-1,
\]
namely there must be enough points in which to evaluate the polynomial, which holds
subject to (\ref{eq:boundk}):
\[
q^k\leq q^{M/k} \Rightarrow  q^{k-1}+1 \leq q^k -1\leq q^{M/k}-1.
\]

{\bf Solving a system of linear equations.}
Alternatively, one can consider decoding as solving a system of linear equations.
Given $k$ linearly independent fragments, say $p(\alpha_{i_1}),\ldots,p(\alpha_{i_k})$,
we can write
\[
\left(
\begin{array}{ccccc}
\alpha_{i_1} & \alpha_{i_1}^{q}&\alpha_{i_1}^{q^2} & \hdots & \alpha_{i_1}^{q^{k-1}}\\
\alpha_{i_2} & \alpha_{i_2}^{q}&\alpha_{i_2}^{q^2} & \hdots & \alpha_{i_2}^{q^{k-1}}\\
\vdots  &  & &  \vdots \\
\alpha_{i_k} & \alpha_{i_k}^q&\alpha_{i_k}^{q^2} & \hdots & \alpha_{i_k}^{q^{k-1}}\\
\end{array}
\right)
\left(\!\!
\begin{array}{c}
p_0 \\
p_1 \\
\vdots \\
p_{k-1}
\end{array}
\!\!\right)
=
\left(\!\!
\begin{array}{c}
p(\alpha_{i_1}) \\
p(\alpha_{i_2}) \\
\vdots \\
p(\alpha_{i_k})
\end{array}
\!\!\right),
\]
and the problem of recovering the object reduces to solving the above system of linear equations.
Note that since $\FF_{q^{M/k}}$ is a vector space of dimension $M/k$ over $\FF_q$, condition
(\ref{eq:boundk}) is needed to guarantee that there exist $k$ linearly independent fragments.

%*************************************************************************%
\subsection{Worked out examples}

Let us first illustrate the choices of the code parameters $(n,k)$, before detailing some
code constructions.

\begin{table}
\begin{center}
\begin{tabular}{|cccl||cccl|}
\hline
$k$ & $M$ & $\nmax$ & $n$       & $k$ & $M$ & $\nmax$ & $n$ \\
\hline
2   & 6   & 7 & 3               & 2   & 6   & $8^3-1$ & $8^2-1$ \\
    & 8   & 15 & 3,~7           &     & 8   & $8^4-1$ & $8^3-1$,$8^2-1$ \\
    & 10 &  31 & 3,~7,~15       &     & 10  & $8^5-1$ & $8^4-1,\ldots,8^2-1$ \\
    & 12 &  63 & 3,~7,~15,~31   &     & 12  & $8^6-1$ & $8^5-1,\ldots,8^2-1$\\
    \hline
3   & 9 & 7 &                   & 3   & 9   & $8^3-1$ & $8^2-1$ \\
    & 12 & 15 & 7               &     & 12  & $8^4-1$ & $8^3-1,8^2-1$\\
    & 15 & 31 & 7,~15           &     & 15  & $8^5-1$ & $8^4-1,\ldots,8^2-1$\\
    \hline
4  & 16 & 15 & 7                & 4   & 16  & $8^4-1$ & $8^3-1,8^2-1$\\
5  & 30 & 63 & 7,~15,~31        & 5   & 30  & $8^6-1$ & $8^5-1,\ldots,8^2-1$\\
6  & 42 & 127 & 7,~15,~31,~63   & 6   & 42  & $8^7-1$ & $8^6-1,\ldots,8^2-1$\\
\hline
    \end{tabular}
\caption{Examples of small code parameters for $q=2$ (on the left) and $q=8$ (on the right).}
\label{tab:param}
\end{center}
\end{table}

We recall that the parameters $(n,k)$ of an $\hsrc(n,k)$ code must satisfy conditions
(\ref{eq:boundn}) and (\ref{eq:boundk}):
\[
k<n\leq q^{M/k}-1,~k\leq M/k.
\]
Thus for any choice of $k$:
\begin {enumerate}
\item
pick any $M$ which is a multiple of $k$ (zero padding can be used to remove the constraint on the real size of the object),
\item
define
\[
\nmax = q^{M/k} -1,
\]
\item
pick any $n$ such that
\[
n>k,~n < \nmax
\]
which is a power of $q$ minus 1 (this last condition is not completely necessary but ensures symmetry as already mentioned above).
\end{enumerate}
Some examples of small parameters $(n,k)$ are given in Table \ref{tab:param} for $q=2$ and $q=8$.

\begin{ex}\label{ex:complete}\rm
Take a data file $\ov=(o_1,\ldots,o_{12})$ of $M=12$ bits ($q=2$), and choose
$k=3$ fragments. We have that $M/k=4$, which satisfies (\ref{eq:boundk}),
that is $k=3 \leq M/k= 4$.

The file $\ov$  is cut into 3 fragments $\ov_1=(o_1,\ldots,o_4)$,
$\ov_2=(o_5,\ldots,o_8)$, $\ov_3=(o_9,\ldots,o_{12}) \in \FF_{2^4}$.
Let $w$ be a generator of the multiplicative group $\FF_{2^4}^*$, such
that $w^4=w+1$. The polynomial used for the encoding is
\[
p(X)=\sum_{i=1}^4o_iw^iX+\sum_{i=1}^4o_{i+4}w^iX^2+\sum_{i=1}^4o_{i+8}w^iX^4.
\]
The $n$-dimensional codeword is obtained by evaluating $p(X)$ in $n$ elements
of $\FF_{2^4}$, $n\leq 15=\nmax$ by (\ref{eq:boundn}).

For $n=4$, if we evaluate $p(X)$ in $w^i$, $i=0,1,2,3$, then the 4
encoded fragments $p(1),p(w),p(w^2),p(w^3)$ are linearly independent and there
is no self-repair possible.

Now for $n=7$, and say, $1,w,w^2,w^4,w^5,w^8,w^{10}$, we get:
\[
(p(1),p(w),p(w^2),p(w^4),p(w^5),p(w^8),p(w^{10})).
\]
Suppose node 5 which stores $p(w^5)$ goes offline. A new comer can get
$p(w^5)$ by asking for $p(w^2)$ and $p(w)$, since
\[
p(w^5)=p(w^2+w)=p(w^2)+p(w).
\]
Table \ref{tab:enumerate} shows other examples of missing fragments and which
pairs can reconstruct them, depending on if 1, 2, or 3 fragments are missing
at the same time.

\begin{table}
\begin{tabular}{|c|c|}
  \hline
  % after \\: \hline or \cline{col1-col2} \cline{col3-col4} ...
  missing &pairs to reconstruct missing fragment(s)\\fragment(s)&\\
  \hline
  $p(1)$   & $(p(w),p(w^4))$;$(p(w^2),p(w^8))$;$(p(w^5),p(w^{10}))$\\
  $p(w)$   & $(p(1),p(w^4))$;$(p(w^2),p(w^5))$;$(p(w^8),p(w^{10}))$\\
  $p(w^2)$ & $(p(1),p(w^8))$;$(p(w),p(w^5))$;$(p(w^4),p(w^{10}))$\\
  \hline
  $p(1)$ and & $(p(w^2),p(w^8))$ or $(p(w^5),p(w^{10}))$ for $p(1)$\\
  $p(w)$     & $(p(w^8),p(w^{10}))$ or $(p(w^2),p(w^{5}))$ for $p(w)$\\
  \hline
  $p(1)$ and & $(p(w^5),p(w^{10}))$ for $p(1)$\\
  $p(w)$ and & $(p(w^8),p(w^{10}))$ for $p(w)$\\
  $p(w^2)$   & $(p(w^4),p(w^{10}))$ for $p(w^2)$  \\
  \hline
\end{tabular}
\vspace{1mm}
\caption{Ways of reconstructing missing fragment(s) in Example
\ref{ex:complete}}\vspace{-6mm}
\label{tab:enumerate}
\end{table}
As for decoding, since $p(X)$ is of degree 5, a node that wants to recover the
data needs $k=3$ linearly independent fragments, say $p(w),p(w^2),p(w^3)$, out
of which it can generate $p(aw+bw^2+cw^3)$, $a,b,c \in \{0,1\}$.
Out of the $7$ non-zero coefficients, 5 of them are enough to recover $p$.
\end{ex}

\begin{ex}\label{ex:completeF8}\rm
Take now a data file $\ov=(o_1,\ldots,o_{16})$ of $M=16$ bytes (that is $q=8$), and choose
$k=4$ fragments. We have that $M/k=4$, which satisfies (\ref{eq:boundk}),
that is $k \leq M/k$.

The file $\ov$  is cut into 4 fragments $\ov_1=(o_1,\ldots,o_4)$,
$\ov_2=(o_5,\ldots,o_8)$, $\ov_3=(o_9,\ldots,o_{12})$, $\ov_4= (o_{13},\ldots,o_{16})\in \FF_{8^4}$.
Let $w$ be a generator of the multiplicative group of $\FF_{8}$, and $\nu$ be a generator of the
multiplicative group of $\FF_{8^4}/\FF_8$ such
that $\nu^4=\nu^3+w$. The polynomial used for the encoding can be either
\[
p(X)=\sum_{i=1}^4o_i\nu^iX+\sum_{i=1}^4o_{i+4}\nu^iX^8
\]
\[
+\sum_{i=1}^4o_{i+8}w\nu^iX^{64}+\sum_{i=1}^4o_{i+12}w\nu^iX^{512},
\]
or
\[
p'(X)=\sum_{i=1}^4o_i\nu^iX+\sum_{i=1}^4o_{i+4}\nu^iX^2
\]
\[
+\sum_{i=1}^4o_{i+8}w\nu^iX^4+\sum_{i=1}^4o_{i+12}w\nu^iX^8.
\]
The $n$-dimensional codeword is obtained by evaluating $p(X)$ in $n$ elements
of $\FF_{8^4}$, $n\leq 8^4-1=\nmax$ by (\ref{eq:boundn}).

Now for $n=63$, and say, $\{ u + \nu v,~u,v \in\FF_8,~(u,v)\neq (0,0) \}$, we get:
\[
\{ p(u + \nu v),~u,v \in\FF_8,~(u,v)\neq (0,0) \},
\]
respectively
\[
\{ p'(u + \nu v),~u,v \in\FF_8,~(u,v)\neq (0,0) \}.
\]
Let us give an example of repair in both cases. Let us start with $p'(X)$.
Suppose the node which stores $p'(w+\nu)$ goes offline. A new comer can get
$p'(w+\nu)$ by asking for $p'(w)$ and $p'(\nu)$, since
\[
p'(w+\nu)=p'(w)+p'(\nu).
\]
If we are instead using $p(X)$, when the node which stores $p(w+\nu)$ goes offline, then
for any choice of $u,v\neq0,~u,v\in \FF_8$, a new comer can ask $p(uw)$ and $p(v\nu)$, and then compute
\[
u^{-1}p(uw)+v^{-1}p(v\nu)=p(w)+p(\nu).
\]

\end{ex}

%\textbf{Note for myself (AD): Need to edit the real life example to smoothen rough edges.}
\begin{ex}\label{ex:real}\rm
The HSRC codes described above implicitly assume a specific, fixed input size, determined by the choices of $n,k,q$ on which the coding is to be performed. This is also the case for many other coding schemes such as Reed-Solomon codes. In real life, data objects may however come in an arbitrary size. Two heuristics deal with the consequent constraints - namely, zero padding (for object which is too small), and slicing (for a large object).

For instance, consider as object $\ov$ a file of $5$MB = $5\cdot 2^{10}$ KB = $5\cdot 2^{20}$ B, so that $M=5\cdot 2^{20}$ for $q=8$.
We cut $\ov$ into $2^8$ slices $\sv_1,\ldots,\sv_{256}$, each slice has size $M' =5\cdot 2^{12}$B=20 KB.
We now encode each $\sv_i$ into a codeword $\xv_i=(x_{i1},\ldots,x_{i,511})$, using a $(511,k')$ codes, with $k'=5\cdot 2^4=80$, so that the size of the encoded blocks is $M'/k'=2^8=256B$.
Then the $j$th node stores the $j$th encoded fragment $x_{ij}$ $j=1,\ldots,511$  for all the slices $i=1,\ldots,256$.
To get a point of comparison of the parameter values $(n,k)$ used in this example, we note that Wuala uses a (517,100) code, while (255,223) Reed Solomon codes (in bytes) are also widely used.

A final observation we want to make here is that, in the following section, when we analyse the static resilience of a code, it determines the availability for one slice of the object, rather than the object itself. However, placing the encoded fragments of all the slices in a common pool of storage nodes - which is also practical in terms of managing meta-information - leads to the same availability of the object, as for the individual slices. Hence we do not distinguish the two in the rest of this paper, and consider that the code could be applied on any object, independently of its size. It has to be noted here that if the encoded fragments of different slices were to be placed among different set of nodes, this would however not hold true. This however is an issue we will not delve into any further, and is also not usually practiced due to practical system design considerations.

%{\bf CHECK IF RS DOES SLICING+explain why placement allows same static resilience analyisi}
\end{ex}

%*************************************************************************%
%
% STATIC RESILIENCE ANALYSIS
%
%*************************************************************************%
\section{Static Resilience Analysis}
\label{sec:static}

The rest of the paper is dedicated to the analysis of the proposed homomorphic
self-repairing codes.
\emph{Static resilience} of a distributed storage system is defined as the probability that an object, once stored in the system, will continue to stay available without any further maintenance, even when a certain fraction of individual member nodes of the distributed system become unavailable.
We start the evaluation of the proposed scheme with a static resilience analysis, where we study how a stored object can be recovered using HSRCs, compared with traditional erasure codes, prior to considering the maintenance process, which will be done in Section \ref{sec:dynamic}.

Let $\probup$ be the probability that any specific node is available. Then, under the assumptions that node availability is $i.i.d$, and no two fragments of the same object are placed on any same node, we can consider that the availability of any fragment is also $i.i.d$ with probability
$\probup$.

%**************************************************************************%
\subsection{A network matrix representation}

Recall that using the above coding strategy, an object $\ov$ of length $M$ is decomposed
into $k$ fragments of length $M/k$:
\[
\ov=(\ov_1,\ldots,\ov_k),~\ov_i\in\FF_{q^{M/k}},
\]
which are further encoded into $n$ fragments of same length:
\[
\xv=(\xv_1,\ldots,\xv_n),~\xv_i\in\FF_{q^{M/k}},
\]
each of the encoded fragment $\xv_i=p(\alpha_i)$ is given to a node to be stored.
We thus have $n$ nodes each possessing a $q$-ary vector of length $M/k$,
corresponding to a system of $n$ linear equations
\[
\left(
\begin{array}{ccccc}
\alpha_1 & \alpha_{1}^q&\alpha_{1}^{q^2} & \hdots & \alpha_{1}^{q^{k-1}}\\
\alpha_2 & \alpha_{2}^q&\alpha_{2}^{q^2} & \hdots & \alpha_{2}^{q^{k-1}}\\
\vdots  &  & &  \vdots \\
\alpha_n & \alpha_{n}^q&\alpha_{n}^{q^2} & \hdots & \alpha_{n}^{q^{k-1}}\\
\end{array}
\right)
\left(\!\!
\begin{array}{c}
p_0 \\
p_1 \\
\vdots \\
p_{k-1}
\end{array}
\!\!\right)
=
\left(\!\!
\begin{array}{c}
p(\alpha_{1}) \\
p(\alpha_{2}) \\
\vdots \\
p(\alpha_{n})
\end{array}
\!\!\right).
\]
If three rows are $\FF_q$-linearly dependent, say rows $1,2$ and $3$, then
\[
u(\alpha_1,\alpha_{1}^q,\alpha_{1}^{q^2},\hdots,\alpha_{1}^{q^{k-1}})+
v(\alpha_2,\alpha_{2}^q,\alpha_{2}^{q^2}, \hdots,\alpha_{2}^{q^{k-1}})
\]
\[
=
(\alpha_3,\alpha_{3}^q,\alpha_{3}^{q^2}, \hdots,\alpha_{3}^{q^{k-1}}),~u,v\in\FF_q,
\]
which can be rewritten as
\[
(u\alpha_1+v\alpha_2,u\alpha_{1}^q+v\alpha_{2}^q,u\alpha_{1}^{q^2}+v\alpha_{2}^{q^2},\hdots,u\alpha_{1}^{q^{k-1}}+v\alpha_{2}^{q^{k-1}})
\]
\[
=
(u\alpha_1+v\alpha_2,(u\alpha_{1}+v\alpha_{2})^q,(u\alpha_{1}+v\alpha_{2})^{q^2}\!\!,\hdots,(u\alpha_{1}+v\alpha_{2})^{q^{k-1}})
\]
\[
=
(\alpha_3,\alpha_{3}^q,\alpha_{3}^{q^2}, \hdots,\alpha_{3}^{q^{k-1}}),~u,v\in\FF_q
\iff u\alpha_1+v\alpha_2=\alpha_3.
\]
Thus to understand the linear dependencies among the fragments owed by each of the $n$ nodes, one can associate to
the $i$th node the value $\alpha_i$. Once all the $\alpha_i$ are written in a $\FF_q$-basis, they can be represented as an
$n\times M/k$ $q$-ary matrix
\begin{equation}\label{eq:MM}
\MM=
\left(
\begin{array}{c}
\alpha_1 \\
\vdots \\
\alpha_n
\end{array}
\right)
=
\left(
\begin{array}{ccc}
\alpha_{1,1} & \ldots & \alpha_{1,M/k}\\
\vdots &  &\vdots \\
\alpha_{n,1}& \ldots & \alpha_{n,M/k}
\end{array}
\right)
\end{equation}
with $\alpha_{i,j}\in\FF_q$.

\begin{ex}\label{ex:M}\rm
In Example \ref{ex:complete}, by fixing as $\FF_2$-basis $\{1,w,w^2,w^3\}$, we have for $n=4$ that $\MM=I_4$, the 4-dimensional identity matrix, since $\alpha_i=w^i,~i=0,\ldots,3$, while for $n=7$, it is
\[
\MM^T=
\left(
\begin{array}{ccccccc}
1&0&0&1&0&1&1\\
0&1&0&1&1&0&1\\
0&0&1&0&1&1&1\\
0&0&0&0&0&0&0\\
\end{array}
\right),
\]
corresponding to $1,w,w^2,w^4,w^5,w^8,w^{10}$.
Now in Example \ref{ex:completeF8}, by fixing as $\FF_8$-basis $\{1,\nu,\nu^2,\nu^3\}$, we have
\[
\MM^T=
\left(
\begin{array}{c}
u \\
v \\
0 \\
0
\end{array}
\right),~u,v\in\FF_8,~(u,v)\neq 0.
\]
\end{ex}
Thus unavailability of a random node is equivalent to losing one linear equation, or a random row of the matrix $\mathbb{M}$. If multiple random nodes (say $n-x$) become unavailable, then the remaining $x$ nodes provide $x$ encoded fragments,
which can be represented by a $x\times M/k$ sub-matrix $\mathbb{M}_x$
of $\MM$. For any given combination of such $x$ available encoded fragments, the original object can still be reconstructed if we can obtain at least $k$ linearly independent rows of $\MM_x$. This is equivalent to say that the object can be reconstructed if the rank of the matrix
$\mathbb{M}_x$ is larger than or equal to $k$.

In the case the polynomial $p(X)=\sum_{i=0}^{k-1}p_iX^{2^i},~p_i\in\FF_q$ is chosen for encoding,
the corresponding system of $n$ linear equations is slightly different
\[
\left(
\begin{array}{ccccc}
\alpha_1 & \alpha_{1}^2&\alpha_{1}^{2^2} & \hdots & \alpha_{1}^{2^{k-1}}\\
\alpha_2 & \alpha_{2}^2&\alpha_{2}^{2^2} & \hdots & \alpha_{2}^{2^{k-1}}\\
\vdots  &  & &  \vdots \\
\alpha_n & \alpha_{n}^2&\alpha_{n}^{2^2} & \hdots & \alpha_{n}^{2^{k-1}}\\
\end{array}
\right)
\left(\!\!
\begin{array}{c}
p_0 \\
p_1 \\
\vdots \\
p_{k-1}
\end{array}
\!\!\right)
=
\left(\!\!
\begin{array}{c}
p(\alpha_{1}) \\
p(\alpha_{2}) \\
\vdots \\
p(\alpha_{n})
\end{array}
\!\!\right),
\]
so that now, though it is still true that if three rows are $\FF_q$-linearly dependent, say rows $1,2$ and $3$, then
\[
u(\alpha_1,\alpha_{1}^2,\alpha_{1}^{2^2},\hdots,\alpha_{1}^{2^{k-1}})+
v(\alpha_2,\alpha_{2}^2,\alpha_{2}^{2^2}, \hdots,\alpha_{2}^{2^{k-1}})
\]
\[
=
(\alpha_3,\alpha_{3}^2,\alpha_{3}^{2^2}, \hdots,\alpha_{3}^{2^{k-1}}),~u,v\in\FF_q,
\]
it is not true anymore that
\[
(u\alpha_1+v\alpha_2,u\alpha_{1}^2+v\alpha_{2}^2,u\alpha_{1}^{2^2}+v\alpha_{2}^{2^2},\hdots,u\alpha_{1}^{2^{k-1}}+v\alpha_{2}^{2^{k-1}})
\]
\[
=
(u\alpha_1+v\alpha_2,(u\alpha_{1}+v\alpha_{2})^2,(u\alpha_{1}+v\alpha_{2})^{2^2}\!\!,\hdots,(u\alpha_{1}+v\alpha_{2})^{2^{k-1}})
\]
since $u^2=u$, resp. $v^2=v$ holds if and only if $u,v \in \FF_2$.
In this case, we have to analyze the matrix
\begin{equation}
\left(
\begin{array}{ccccc}
\alpha_1 & \alpha_{1}^2&\alpha_{1}^{2^2} & \hdots & \alpha_{1}^{2^{k-1}}\\
\alpha_2 & \alpha_{2}^2&\alpha_{2}^{2^2} & \hdots & \alpha_{2}^{2^{k-1}}\\
\vdots  &  & &  \vdots \\
\alpha_n & \alpha_{n}^2&\alpha_{n}^{2^2} & \hdots & \alpha_{n}^{2^{k-1}}\\
\end{array}
\right)
\end{equation}
directly.

\begin{ex}\label{ex:fqlin}
Consider again Example \ref{ex:completeF8}, and suppose the two polynomials
$p(X)$ and $p'(X)$ are both evaluated in $\alpha_1=1$, $\alpha_2=\nu$, and $\alpha_3=w+\nu$.
Clearly $w\alpha_1+\alpha_2=\alpha_3$. Consequently, when evaluating the polynomial $p(X)$
in $\alpha_1,\alpha_2,\alpha_3$, we get as part of the system of linear equations the following 3 rows:
\[
(1,1,1,1),(\nu,\nu^8,\nu^{64},\nu^{512})
\]
and
\[
(w+\nu,(w+\nu)^8,(w+\nu)^{64},(w+\nu)^{512})=
\]
\[
(w+\nu,w^8+\nu^8,w^{64}+\nu^{64},w^{512}+\nu^{512}).
\]
Cleary the three rows are linearly dependent. If now instead the polynomial $p'(X)$ is similarly evaluated,
we obtain
\[
(1,1,1,1),(\nu,\nu^2,\nu^{4},\nu^{8})
\]
and
\[
(w+\nu,(w+\nu)^2,(w+\nu)^{4},(w+\nu)^{8}).
\]
This time the dependencies disappear, since
\[
w+\nu^2 \neq (w+\nu)^2=w^2+\nu^2.
\]
\end{ex}

\begin{center}

\begin{figure*}[ht]
 \subfigure[Validation of the static resilience analysis]{
  \includegraphics[scale=0.34]{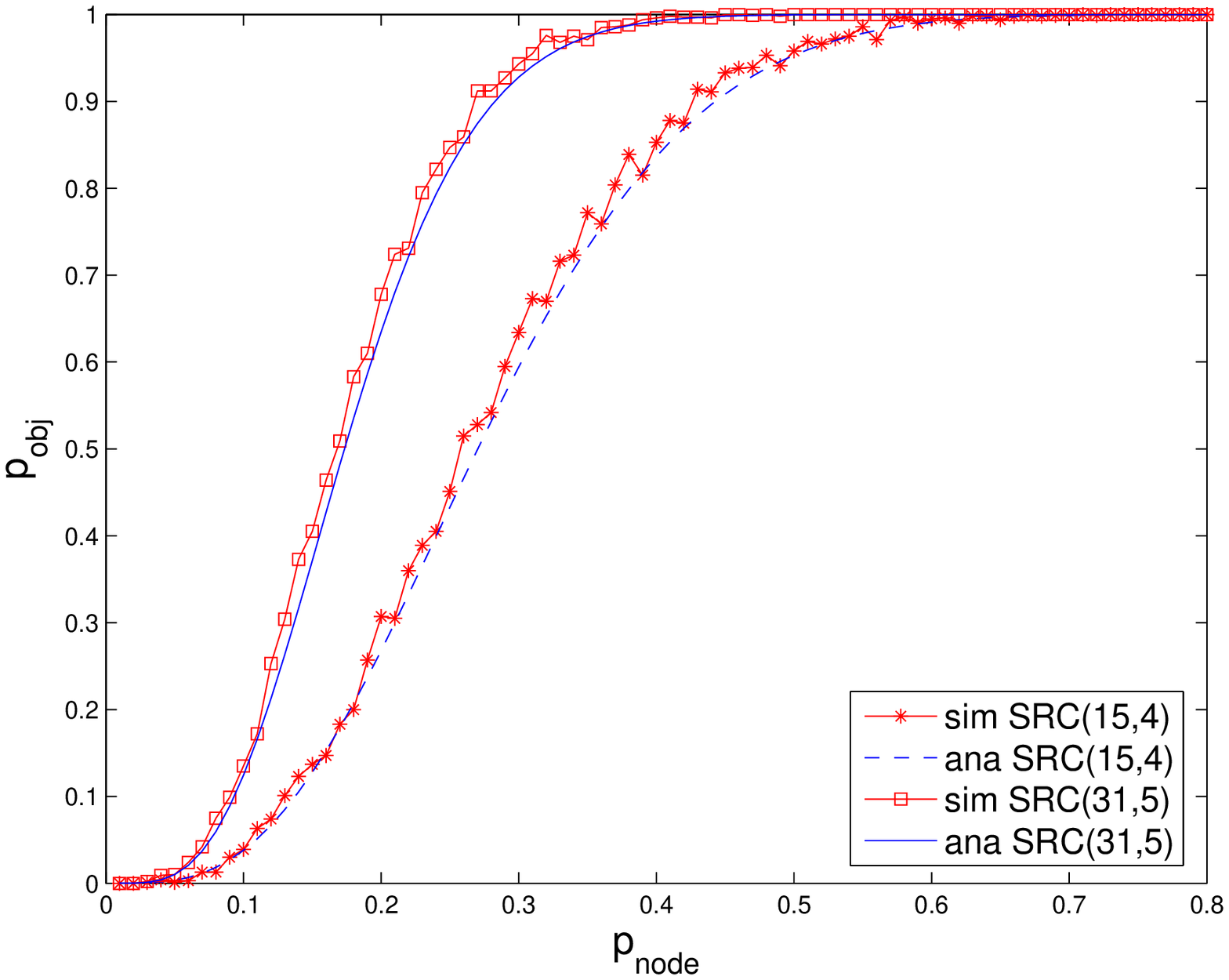}
   \label{fig:simvsana}}
 \subfigure[Comparison of SRC with EC]{
  \includegraphics[scale=0.34]{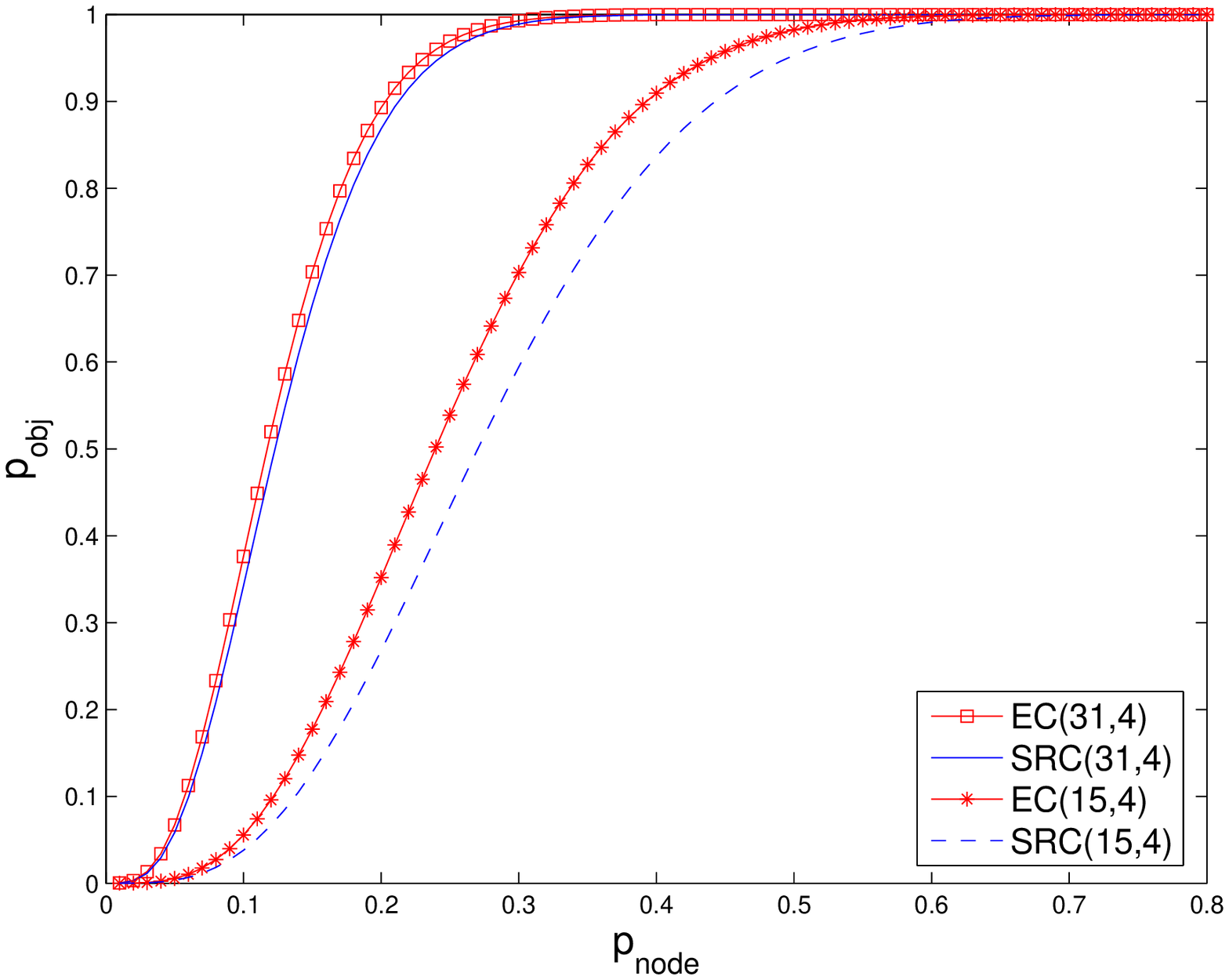}
   \label{fig:ECvsSRCk4}}
 \subfigure[Comparison of SRC with EC]{
  \includegraphics[scale=0.34]{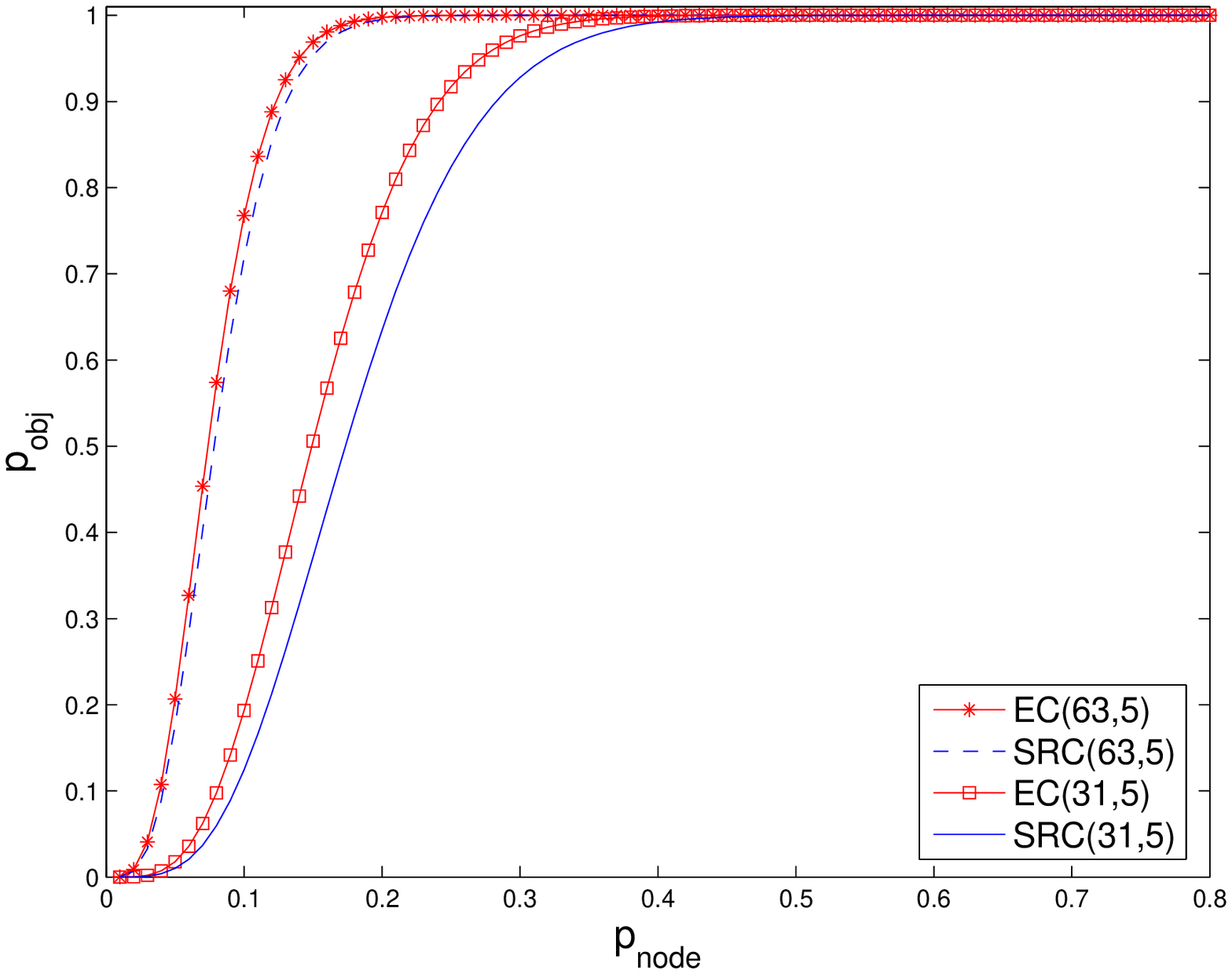}
   \label{fig:ECvsSRCk5}}
 \caption{Static resilience of homomorphic self-repairing codes (\hsrc) for $q=2$: Validation of analysis, and comparison with MDS erasure codes (EC)} \label{fig:staticresilience}
 \vspace{3mm}
\end{figure*}
\end{center}

%*************************************************************************%
\subsection{Probability of object retrieval}

Consider a $(q^d-1) \times d$ $q$-ary matrix for some $d>1$, with distinct rows, no all zero row, and thus rank $d$.
The case of interest for us is $d=M/k$, since $\mathbb{M}$ is an $q^{M/k}-1 \times M/k$ matrix.
If we remove some of the rows uniformly randomly with some probability
$1-\probup$, then we are left with a $x \times d$ sub-matrix - where $x$ is binomially distributed. We define $R(x,d,r)$ as the number of
$x \times d$ sub-matrices with rank $r$, voluntarily including all the possible permutations of the rows in the counting.

\begin{lem}\label{lem:count}
Let $R(x,d,r)$ be the number of $x\times d$ sub-matrices
with rank $r$ of a tall $(q^d-1)\times d$ matrix of rank $d$.
We have that $R(x,d,r)=0$ when (i) $r=0$, (ii) $r>x$, (iii) $r=x$, with $x>d$, or (iv) $r<x$ but $r>d$.
Then, counting row permutations:
\[
R(x,d,r)=
\prod_{i=0}^{r-1}(q^d-q^i)\mbox{ if }r=x,x\leq d,
\]
and for $r< x$ with $r\leq d$:
\[
R(x,d,r)=
R(x-1,d,r-1)(q^d-q^{r-1})+R(x-1,d,r)(q^r-x).
\]
\end{lem}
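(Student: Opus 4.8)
The plan is to identify the tall $(q^d-1)\times d$ matrix of rank $d$ with the list of \emph{all} $q^d-1$ nonzero vectors of $\FF_q^d$ written as its rows: since the rows are required to be distinct, nonzero, and to span a $d$-dimensional space, this is the only such matrix up to reordering. Under this dictionary, an $x\times d$ sub-matrix counted with row permutations is exactly an ordered tuple of $x$ distinct nonzero vectors $(v_1,\ldots,v_x)$, and its rank equals $\dim\mathrm{span}(v_1,\ldots,v_x)$. Every assertion then becomes a counting statement about such ordered tuples.

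First I would dispose of the four vanishing cases using only the elementary bound $1\le\mathrm{rank}(v_1,\ldots,v_x)\le\min(x,d)$ valid for $x\ge 1$, where the lower bound holds because no $v_i$ is the zero vector. This immediately yields $R(x,d,r)=0$ when $r=0$, when $r>x$, when $r>d$, and in particular when $r=x>d$. For the diagonal case $r=x\le d$, which counts ordered linearly independent $x$-tuples, I would build the tuple one vector at a time: $v_1$ may be any of the $q^d-1=q^d-q^0$ nonzero vectors, and once $v_1,\ldots,v_i$ span an $i$-dimensional space, $v_{i+1}$ may be any vector outside that span, of which there are $q^d-q^i$. Multiplying for $i=0,\ldots,r-1$ gives $\prod_{i=0}^{r-1}(q^d-q^i)$.

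The heart of the proof is the recursion for $r<x$, $r\le d$, which I would obtain by conditioning on the prefix $(v_1,\ldots,v_{x-1})$. If the full tuple has rank $r$, then the prefix has rank either $r$ or $r-1$, since appending one vector raises the rank by at most one; these disjoint cases are counted by $R(x-1,d,r)$ and $R(x-1,d,r-1)$. In the first case $v_x$ must lie in the $r$-dimensional span of the prefix: its $q^r-1$ nonzero vectors are available except for the $x-1$ already-chosen (distinct) rows, all of which lie in that span, leaving $q^r-1-(x-1)=q^r-x$ choices. In the second case $v_x$ must lie outside the $(r-1)$-dimensional span to lift the rank to $r$; there are $q^d-q^{r-1}$ such vectors, each automatically nonzero and distinct from the earlier rows. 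Summing gives $R(x-1,d,r-1)(q^d-q^{r-1})+R(x-1,d,r)(q^r-x)$.

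The main obstacle is the bookkeeping in the rank-preserving case: one must subtract \emph{both} the zero vector and the $x-1$ previously selected rows to arrive at the coefficient $q^r-x$, and then check that in the rank-increasing case the distinctness constraint is vacuous precisely because every new candidate already falls outside the span. Once this is verified, the two-case partition of the prefix is exhaustive and disjoint by the rank bound, and the recursion follows.
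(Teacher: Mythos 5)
Your proposal is correct and follows essentially the same route as the paper's proof: the diagonal case $r=x\le d$ is counted by choosing linearly independent rows one at a time (giving $\prod_{i=0}^{r-1}(q^d-q^i)$), and the recursion is obtained by conditioning on whether the last row raises the rank of the $(x-1)$-row prefix, with the same coefficients $q^d-q^{r-1}$ and $q^r-1-(x-1)=q^r-x$. Your explicit identification of the tall matrix with the list of all $q^d-1$ nonzero vectors of $\FF_q^d$, and your remark that distinctness is vacuous in the rank-increasing case, only make the paper's bookkeeping more precise.
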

\begin{proof}
There are no non-trivial matrix with rank $r=0$.
When $r>x$, $r=x$ with $x>d$, or $r<x$ but $r>d$, $R(x,d,r)=0$ since the rank of a matrix cannot be larger than the smallest of its dimensions.

For the case when $r=x$, with $x\leq d$, we deduce $R(x,d,r)$ as follows. To build a matrix $\MM_x$ of rank $x=r$, the first row can be chosen from any of the $q^d-1$ rows in $\MM$, and the second row should not be a multiple of the first row, which gives $q^d-2$ choices. The third row needs to be linearly independent from the first two rows. Since there are $q^2$ linear combinations of the first two rows, which includes the all zero vector which is discarded, we obtain  $q^d-q^2$ choices. In general, the $(i+1)$st row can be chosen from $q^d-q^i$ options that are linearly independent from the $i$ rows that have already been chosen. We thus obtain $R(x,d,r) = \prod_{i=0}^{r-1}(q^d-q^i)$ for $r=x$, $x\leq d$.

For the case where $r< x$ with $r\leq d$, we observe that $x \times d$ matrices of rank $r$ can be inductively obtained by either (I) adding a linearly independent row to a $(x-1) \times d$ matrix of rank $r-1$, or (II) adding a linearly dependent row to a $(x-1) \times d$ matrix of rank $r$. We use this observation to derive the recursive relation
\[
R(x,d,r)= R(x-1,d,r-1)(q^d-q^{r-1})+R(x-1,d,r)(q^r-x),
\]
where $q^d-1-(q^{r-1}-1)$ counts the number of
linearly independent rows that can be added, and $q^r-1-(x-1)$ is on the
contrary the number of linearly dependent rows.
\end{proof}
We now remove the permutations that we counted in the above analysis by introducing
a suitable normalization.
\begin{cor}\label{cor:rho}
Let $\rho(x,d,r)$ be the fraction of sub-matrices of dimension $x \times d$ with rank $r$
out of all possible sub-matrices of the same dimension. Then
\[
\rho(x,d,r)=\frac{R(x,d,r)}{\sum_{j=0}^dR(x,d,j)}=\frac{R(x,d,r)}{C_{x}^{q^d-1}x!}.
\]
In particular
\begin{equation}\label{eq:rhox}
\rho_x(d)=\sum_{r=k}^d \rho(x,d,r)
\end{equation}
is the conditional probability that the stored object can be retrieved by contacting an arbitrary $x$ out of the $n$ storage nodes.
\end{cor}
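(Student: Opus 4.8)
The plan is to handle the two displayed equalities separately and then attach the probabilistic reading. The first equality, $\rho(x,d,r)=R(x,d,r)/\sum_{j=0}^d R(x,d,j)$, is essentially the definition of a fraction: $R(x,d,r)$ counts the rank-$r$ sub-matrices with all row permutations included, so dividing by the total number of sub-matrices counted under the very same convention yields exactly the fraction of those having rank $r$. The only thing to verify is that the denominator ranges over all admissible ranks. Since every row of the tall matrix is non-zero, no selection has rank $0$, and the rank of an $x\times d$ matrix cannot exceed $\min(x,d)\le d$; hence letting $j$ run from $0$ to $d$ exhausts all cases, with $R(x,d,0)=0$ and $R(x,d,j)=0$ for $j>\min(x,d)$ as already recorded in Lemma \ref{lem:count}.

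The second equality, $\sum_{j=0}^d R(x,d,j)=C_x^{q^d-1}\,x!$, is the heart of the normalization, and I would prove it by a direct counting identity. The left-hand sum counts every ordered selection of $x$ distinct rows of the tall $(q^d-1)\times d$ matrix exactly once, because each such selection forms a matrix of a single well-defined rank $j$ and is therefore counted in precisely one term $R(x,d,j)$, with no omission or double counting. The number of ordered selections of $x$ distinct rows out of the $q^d-1$ available rows is $(q^d-1)(q^d-2)\cdots(q^d-x)=C_x^{q^d-1}\,x!$; equivalently, one first chooses the unordered set of $x$ rows in $C_x^{q^d-1}$ ways and then orders them in $x!$ ways, matching the permutation convention built into $R$.

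For the probabilistic interpretation I would invoke the static-resilience setting in which $n=\nmax=q^{M/k}-1$, so that $\MM$ coincides with the tall matrix and $d=M/k$. Conditioning on exactly $x$ of the $n$ nodes being available, the assumption that node (hence fragment) availability is i.i.d.\ with probability $\probup$ makes every $x$-subset of rows equally likely, so the conditional probability of any rank event equals the corresponding fraction $\rho(x,d,r)$. Since, as established in the matrix discussion preceding the lemma, the object is recoverable precisely when the available sub-matrix $\MM_x$ has rank at least $k$, summing these disjoint fractions over $r=k,\ldots,d$ gives the conditional retrieval probability $\rho_x(d)=\sum_{r=k}^d \rho(x,d,r)$, as claimed.

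I expect the main obstacle to be the bookkeeping in the normalization step: one must argue that the permutation convention is applied consistently in numerator and denominator, and that summing $R(x,d,j)$ over all $j$ genuinely recovers the full ordered-selection count $C_x^{q^d-1}\,x!$. Everything else is either definitional or a direct appeal to the rank criterion for decodability, namely $\mathrm{rank}(\MM_x)\ge k$, established earlier.
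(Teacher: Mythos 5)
Your proposal is correct and follows essentially the same route as the paper's own (very terse) proof: the normalization comes from counting the $C_{x}^{q^d-1}$ unordered choices of $x$ rows times the $x!$ orderings, and the retrieval probability follows from the rank-$\ge k$ criterion. You simply spell out the intermediate step that $\sum_j R(x,d,j)$ partitions the ordered selections by rank, which the paper leaves implicit.
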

\begin{proof}
It is enough to notice that there are $C_{x}^{q^d-1}$ ways to choose $x$ rows out of the possible $q^d-1$ options.
The chosen $x$ rows can be ordered in $x!$ permutations.

In particular, when the rank is at least $k$, the object can be retrieved.
\end{proof}

We now put together the above results to compute the probability $\objup$ of an object being recoverable when
using an $\hsrc(n,k)$ code to store a length $M$ object made of $k$ fragments encoded into $n$ fragments each
of length $M/k$.
\begin{cor}
Using an $\hsrc(n,k)$, the probability $\objup$ of recovering the object is
\[
\objup = \sum_{x=k}^n \sum_{r=k}^d \rho(x,d,r) C_{x}^{n} \probup^x (1-\probup)^{n-x},
\]
where $d=\log_q{n+1}$.
\end{cor}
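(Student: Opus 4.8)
The plan is to obtain $\objup$ by the law of total probability, conditioning on the number of storage nodes that happen to be available. First I would record the distribution of this number: since the beginning of the section assumes that each of the $n$ nodes is available independently with probability $\probup$, the number $x$ of available nodes is binomially distributed, so exactly $x$ nodes survive with probability $C_x^n \probup^x (1-\probup)^{n-x}$.

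Second, I would identify the conditional probability of object recovery given that exactly $x$ nodes survive. The key observation is that, under the i.i.d.\ availability assumption, conditioned on $x$ nodes being available every size-$x$ subset of nodes is equally likely; hence the surviving fragments correspond to a uniformly chosen $x\times d$ submatrix of the tall $(q^d-1)\times d$ matrix of Lemma \ref{lem:count}. Here $d=\log_q(n+1)$ is the dimension of the $\FF_q$-subspace whose nonzero elements are the $n=q^d-1$ evaluation points $\alpha_i$ of the symmetric construction, so that $\MM$, written in a basis adapted to this subspace, is exactly such a matrix. As established in the discussion surrounding $\MM$, the object is recoverable precisely when this submatrix has rank at least $k$. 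By Corollary \ref{cor:rho}, the fraction of $x\times d$ submatrices of rank at least $k$ is exactly $\rho_x(d)=\sum_{r=k}^d\rho(x,d,r)$, which is therefore the desired conditional probability.

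Finally, I would combine the two ingredients through total probability,
\[
\objup=\sum_{x=0}^n C_x^n \probup^x (1-\probup)^{n-x}\,\rho_x(d),
\]
and note that $\rho_x(d)=0$ whenever $x<k$: a submatrix with fewer than $k$ rows cannot attain rank $k$, so each $\rho(x,d,r)$ with $r\geq k>x$ vanishes by clause (ii) of Lemma \ref{lem:count}. This truncates the outer sum to $x\geq k$, and substituting the definition of $\rho_x(d)$ yields the claimed double sum.

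The main obstacle --- indeed essentially the only non-mechanical point --- is the uniformity step in the second paragraph: one must argue that conditioning on exactly $x$ survivors induces the \emph{uniform} distribution over $x\times d$ submatrices, so that the combinatorial ratio $\rho_x(d)$ counted in Corollary \ref{cor:rho} is legitimately the conditional recovery probability. This is a direct consequence of the i.i.d.\ node-availability model, under which all surviving sets of a given size are exchangeable; everything else reduces to bookkeeping on the summation indices and recognition of the binomial weights.
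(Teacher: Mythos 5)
Your proof is correct and takes essentially the same route as the paper's: condition on the binomially distributed number of surviving nodes, invoke Corollary \ref{cor:rho} for the conditional recovery probability $\rho_x(d)$, and justify $d=\log_q(n+1)$ by identifying $\MM$ (in a basis adapted to the subspace of evaluation points) with the full $(q^d-1)\times d$ matrix of Lemma \ref{lem:count}. The paper's own proof is just a terser version of your last step, phrased as deleting the redundant constant columns of $\MM$ when $n<\nmax$; your explicit total-probability and truncation bookkeeping is consistent with what the paper leaves implicit.
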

\begin{IEEEproof}
If $n=\nmax=q^{M/k}-1$, we apply Lemma \ref{lem:count} and Corollary
\ref{cor:rho} with $d=M/k$.
If $n=q^i-1$, for some integer $i\leq M/k$ such that $n>k$ (otherwise there is no encoding), then $\MM$ has $M/k-i$ columns which are either all zeros or all ones vectors, as shown on Example \ref{ex:M}.
Thus the number of its sub-matrices of rank $r$ is given by applying Lemma \ref{lem:count} on the matrix obtained
by removing these redundant columns.
\end{IEEEproof}
We validate the analysis with simulations, and as can be observed from Figure \ref{fig:simvsana}, we obtain a precise match.

To conclude this analysis, let us get back to Example \ref{ex:completeF8} and notice that the static resilience analysis
derived above holds for the encoding via the polynomial $p(X)$. It would not be the case if $p'(X)$ were used, in which
case only the $\FF_2$-linear dependencies should be kept.

%*************************************************************************%
\subsection{Comparison with standard erasure codes}
\label{sec:comapreMDS}
While there is marginal deterioration of static resilience using SRC with respect to MDS codes (as compared in Fig. \ref{fig:staticresilience}, and to be discussed soon after), we first elaborate how SRC differs fundamentally from MDS codes by looking at the conditional probability that the stored object can be retrieved by contacting an arbitrary $x$ out of the $n$ storage nodes.

For $(n,k)$ MDS erasure codes, $\rho_x$ is a deterministic and binary value equal to one for $x \geq k$, and zero for smaller $x$. For self-repairing codes, the value is probabilistic. In Fig. \ref{fig:oneminusrho} we show for our toy example $HSRC(31,5)$ the probability that the object can be retrieved by contacting arbitrary $x$ nodes, i.e., $\rho_x$, where the values of $\rho_x$ for $x \geq k$ were computed from (\ref{eq:rhox}).\footnote{$\rho_x$ is zero for $x<k$ for \hsrc also.} This can alternatively be interpreted as the probability that the object is retrievable despite precisely $n-x$ random failures, and only $x$ random storage nodes are available.

In particular, if any five storage nodes are randomly picked, it is likely that the object cannot be reconstructed with a probability  0.5096, while if any seven random nodes are picked, this probability decreases to 0.0757, while, if thirteen or more random nodes are picked, then the object can certainly be reconstructed. In contrast, for MDS codes, the object will be retrievable from the data available at any arbitrary five nodes.

Of-course, this rather marginal sacrifice (we will next compare HSRC's static resilience with MDS erasure codes to demonstrate the marginality) provides HSRC an incredible amount of self-repairing capability. Given that a practical system will carry out repairs rather frequently, and HSRC in fact allows very cheap repairs, a system using HSRC will be more easily and cheaply maintained, and hence be reliable - particularly be avoiding multiple failures to cumulate.

Likewise, for data access and reconstruction, in practice, storage nodes will be accessed in a planned manner, rather than randomly. There are in fact $C_{k}^{n} \rho_k$ (i.e., 83324 for $\hsrc(31,5)$) unique subsets of precisely $k$ storage nodes that allows reconstruction. Hence, in practice, object access overheads will not be different than when using a MDS coding based scheme.

Let us now compare HSRC against standard MDS erasure codes in terms of the effective static resilience. If we use a $(n,k)$ MDS erasure code, then the probability
that the object is recoverable when each individual storage node may fail i.i.d. with probability $1-\probup$ is:
\[
\objup = \sum_{i=k}^{n}C_{i}^n \probup^i (1-\probup)^{n-i}.
\]
Note that MDS codes may not exist for specific arbitrary choice of $n$ and $k$. However, for the sake of fair comparison, this
formula and the following plots are provided as if they were to exist.

\begin{figure}
\includegraphics[scale=0.6]{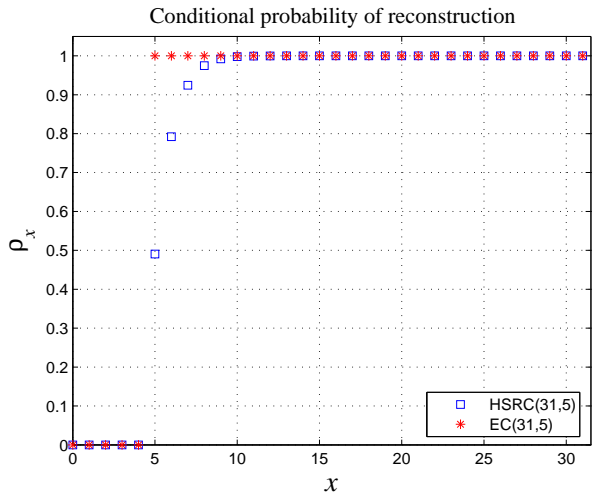}
\caption{Comparison of the probability of reconstruction of object using encoded data in $x$ random storage nodes}
\label{fig:oneminusrho}
\end{figure}

In Figures \ref{fig:ECvsSRCk4} and \ref{fig:ECvsSRCk5}, we compare the static resilience achieved using the proposed homomorphic SRC with that of MDS erasure codes.

In order to achieve the self-repairing property in SRC, it is obvious that it is necessary to introduce extra `redundancy' in its code structure, but we notice from the comparisons that this overhead is in fact marginal. For the same storage overhead $n/k$, the overall static resilience of SRC is only slightly lower than that of EC, and furthermore, for a fixed $k$, as the value of $n$ increases, SRC's static resilience gets very close to that of EC. Furthermore, even for low storage overheads, with relatively high $\probup$, the probability of object availability is indeed 1. In any storage system, there will be a maintenance operation to replenish lost fragments (and hence, the system will operate for high values of $\probup$). We will further see in the next section that SRCs have significantly lower maintenance overheads. These make SRCs a practical coding scheme for networked storage.

%{\bf FIX K AND STATIC RESILIENCE AND COMPUTE N TO GET ANOTHER COMPARISON IN TERMS OF STORAGE COST --- This should be here, and %not later, where you had indicated previously - keyonki that is the section for communication overheads, and this rightfully %belongs with the rest of the discussion on MDS ...}

%*************************************************************************%
%
% DYNAMIC ANALYSIS
%
%*************************************************************************%

\section{Communication overheads of self-repair}
\label{sec:dynamic}

In the previous section we studied the probability of recovering an object if it so happens that only $\probup$ fraction of nodes which had originally stored the encoded fragments continue to remain available, while lost redundancy is yet to be replenished. Such a situation may arise either because a lazy maintenance mechanism (such as, in \cite{TotalRecall}) is applied, which triggers repairs only when redundancy is reduced to certain threshold, or else because of multiple correlated failures before repair operations may be carried out. We will next investigate the communication overheads in such scenarios, emphasizing on those HSRC with an XOR-like structure (that is, retaining $\FF_2$-linearity).
Note that this is really the regime in which we need an analysis, since
in absence of correlated failures, and assuming that an eager repair strategy is applied, whenever one encoded block is detected to be unavailable, it is immediately replenished. The proposed HSRC ensures that this one missing fragment can be replenished by obtaining only two other (appropriate) encoded fragments, thanks to the HSRC subspace structure.

\begin{defn}The \emph{diversity} $\divsrc$ of SRC is defined as the number of mutually exclusive pairs of fragments which can be used to recreate any specific fragment.
\end{defn}

In Example \ref{ex:complete}, it can be seen easily that $\divsrc=3$. Let us assume that $p(w)$ is missing. Any of the three exclusive fragment pairs, namely $((p(1),p(w^4))$; $(p(w^2),p(w^5))$ or $(p(w^8),p(w^{10}))$ may be used to reconstruct $p(w)$. See Table \ref{tab:enumerate} for other examples. In Example \ref{ex:completeF8} where the encoding is done using
$p'(X)$, the diversity is $\divsrc=31$. Indeed, every encoded fragment is of the form $p'(u+\nu v)$, $u,v\in\FF_8$, so that
for every $u'+\nu v'$, $u',v'\in\FF_8$, we have that the pair $(p'(u'+\nu v'),p'((u'+u)+\nu (v'+v)))$ can be used to reconstruct $p'(u+\nu v)$, since $p'(u'+\nu v')+p'((u'+u)+\nu (v'+v))= p'(u+\nu v)$, and the fragment $p'((u'+u)+\nu (v'+v))$ is indeed present in the network since $u',v'\in\FF_8$ and $u,v$ run through every element in $\FF_8$ but for the pair
$(0,0)$.

\begin{lem} The diversity $\divsrc$ of a $\hsrc(n,k)$ is $(n-1)/2$.
\end{lem}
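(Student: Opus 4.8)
The plan is to count, for an arbitrary fixed encoded fragment $p(\alpha)$, how many mutually exclusive pairs $(p(\beta),p(\gamma))$ satisfy $p(\beta)+p(\gamma)=p(\alpha)$, which by the XOR-like property \eqref{eq:hom2} is exactly the condition $\beta+\gamma=\alpha$. So the problem reduces to a purely combinatorial count in the index set $\{\alpha_1,\ldots,\alpha_n\}$: I need to count unordered pairs of \emph{distinct, available} indices whose sum equals the target index $\alpha$.

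First I would fix the natural setting in which the diversity is well-defined, namely the symmetric construction where the set of evaluation points forms (the nonzero elements of) an $\FF_q$-subspace, so that $n=q^d-1$ for some $d$ and the point set is closed under addition up to the zero element. Concretely, the $n$ indices together with $0$ form a vector space, so for the fixed target $\alpha\neq 0$ and any $\beta$ in the index set, the complement $\gamma=\alpha+\beta$ is again a well-defined element of the space. The key step is then to observe that as $\beta$ ranges over all $n$ index values, $\beta\mapsto\alpha+\beta$ is a fixed-point-free involution except that I must discard the two degenerate values: $\beta=\alpha$ (giving $\gamma=0$, which is not a stored fragment) and $\beta=0$ (not a stored fragment either). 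After removing these, the remaining $n-1$ indices pair up under $\beta\leftrightarrow\alpha+\beta$ into $(n-1)/2$ disjoint unordered pairs, each of which reconstructs $p(\alpha)$.

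The main things to verify carefully are that the pairing is genuinely an involution without fixed points on the surviving set, and that it produces pairs of \emph{distinct} elements. A fixed point would require $\beta=\alpha+\beta$, i.e.\ $\alpha=0$, which is excluded; and $\beta=\gamma$ would require $\alpha+\beta=\beta$, again forcing $\alpha=0$. Hence every orbit has size exactly two, and the $n-1$ surviving indices split into exactly $(n-1)/2$ mutually exclusive pairs, which forces $n-1$ to be even (consistent with $n=q^d-1$ and $q$ a power of $2$). This matches the worked instances: for Example~\ref{ex:complete} with $n=7$ one gets $(7-1)/2=3=\divsrc$, and for the $p'(X)$ encoding in Example~\ref{ex:completeF8} with $n=63$ one gets $(63-1)/2=31=\divsrc$.

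The step I expect to be the only real subtlety is establishing that these $(n-1)/2$ pairs are not merely distinct as index pairs but \emph{mutually exclusive} in the sense of the diversity definition, i.e.\ no two of the chosen pairs share a fragment. This is immediate from the involution structure, since a partition of a set into orbits of a fixed-point-free involution is by construction a collection of disjoint blocks, so each index appears in exactly one pair. Everything else is a direct application of the homomorphic identity \eqref{eq:hom2} and the subspace closure of the evaluation points; no field arithmetic beyond the characteristic-$2$ additive structure is required.
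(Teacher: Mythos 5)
Your proposal is correct and follows essentially the same argument as the paper: fix the target $\alpha$, pair each admissible $\beta$ (i.e.\ $\beta\neq 0,\alpha$) with $\alpha+\beta$, and count $(n-1)/2$ unordered pairs using $n=q^d-1$. Your involution framing additionally makes explicit that the pairs are disjoint (hence genuinely ``mutually exclusive''), a point the paper's proof leaves implicit, but the underlying counting is identical.
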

\begin{IEEEproof}
We have that $n=q^d-1$ for some suitable $d$.
The polynomial $p(x)$ is evaluated in $\alpha = \sum_{i=0}^{d-1}a_i w^i$, where $a_i \in \FF_q$ and $(a_0,...,a_{d-1})$ takes all the possible $q^d$ values, but for the whole zero one. Thus for every $\alpha$, we can create the pairs $(\alpha+\beta,\beta)$ where $\beta$ takes $q^d-2$ possible values, that is all values besides 0 and $\alpha$. This gives $q^d-2$ (which is equal to $n-1$) pairs, but since pairs $(\alpha+\beta,\beta)$ and  $(\beta,\alpha+\beta)$ are equivalent, we have $(n-1)/2$ distinct such pairs.
\end{IEEEproof}

An interesting property of SRC can be inferred from its diversity.
\begin{cor}
\label{cor:2isenough}
For a Homomorphic SRC, if at least $(n+1)/2$ fragments are available, then for any of the unavailable fragments, there exists some pair of available fragments which is adequate to reconstruct the unavailable fragment.
\end{cor}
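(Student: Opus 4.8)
The plan is to combine the immediately preceding Lemma (which gives diversity $\divsrc=(n-1)/2$) with a simple counting/pigeonhole argument. First I would fix an arbitrary unavailable fragment, say the one obtained by evaluating $p$ at $\alpha$. The key structural observation I would extract from the proof of that Lemma is that the remaining $n-1$ fragments are partitioned into exactly $(n-1)/2$ \emph{disjoint} pairs, each of which reconstructs the missing fragment. Indeed, each such reconstructing pair has the form $\{p(\beta),p(\alpha+\beta)\}$ with $\beta\neq 0,\alpha$, and two such pairs must coincide as soon as they share a fragment, since $\{\beta,\alpha+\beta\}=\{\gamma,\alpha+\gamma\}$ forces $\gamma\in\{\beta,\alpha+\beta\}$. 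Hence these $(n-1)/2$ pairs are mutually exclusive and together cover all $q^d-2=n-1$ fragments other than the one at $\alpha$.

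Next I would bound the number of unavailable fragments that can interfere. Recall $n=q^d-1$ is odd because $q=2^t$ is even, so $(n\pm 1)/2$ are genuine integers. If at least $(n+1)/2$ fragments are available, then at most $n-(n+1)/2=(n-1)/2$ fragments are unavailable in total. Since the fragment at $\alpha$ is itself one of these unavailable ones, among the $n-1$ partner fragments at most $(n-1)/2-1=(n-3)/2$ can be unavailable.

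The final step is the disjointness/pigeonhole count. Because the $(n-1)/2$ reconstructing pairs are disjoint, each unavailable partner fragment lies in at most one pair and therefore spoils at most one pair. With at most $(n-3)/2$ unavailable partner fragments, at most $(n-3)/2$ of the $(n-1)/2$ pairs can be spoiled, leaving at least $(n-1)/2-(n-3)/2=1$ pair with both fragments available. By the homomorphic relation (\ref{eq:hom2}), this surviving pair reconstructs the missing fragment, which is exactly the claim. I expect the only point requiring care — and the real crux — to be the disjointness of the reconstructing pairs, since it is precisely disjointness that guarantees a single missing fragment can destroy no more than one reconstruction option; once that is pinned down, the count is immediate and the parity of $n$ makes the inequality tight enough to leave one pair intact.
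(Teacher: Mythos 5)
Your proof is correct and follows essentially the same route as the paper's: both arguments rest on the fact that the $(n-1)/2$ reconstructing pairs for a missing fragment are mutually disjoint and partition the remaining $n-1$ fragments, and then apply a pigeonhole count (the paper counts available fragments filling the pairs, you count unavailable fragments spoiling them — the same argument read in the complement). Your explicit verification of the disjointness of the pairs $\{\beta,\alpha+\beta\}$ is a welcome detail that the paper leaves implicit in its definition of ``mutually exclusive pairs.''
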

\begin{IEEEproof}
Consider any arbitrary missing fragment $\alpha$. If up to $(n-1)/2$ fragments were available, in the worst case, these could belong to the $(n-1)/2$ exclusive pairs. However, if an additional fragment is available, it will be paired with one of these other fragments, and hence, there will be at least one available pair with which $\alpha$ can be reconstructed.
\end{IEEEproof}

\subsection{Overheads of recreating one specific missing fragment}

Recall that $x$ is defined as the number of fragments of an object that are available at a given time point. For any specific missing fragment, any one of the corresponding mutually exclusive pairs is adequate to recreate the said fragment. From Corollary \ref{cor:2isenough} we know that if $x \geq (n+1)/2$ then two downloads are enough. Otherwise, we need a probabilistic analysis. Both nodes of a specific pair are available with probability $(x/n)^2$. The probability that only two fragments are enough to recreate the missing fragment is $p_2 = 1-(1- (x/n)^2)^\divsrc$.
%\footnote{Such pairs may be probed either sequentially or in parallel. In all cases, the overheads of pinging are marginal with respect to the actual repair costs, and may also be amortized using the usual regular probing necessary in the system to detect missing fragments. For the sake of completeness: If pinging is done in parallel, the communication overhead is $2 \divsrc$ ping messages. The probability that $i$ pairs need to be pinged sequentially in order to encounter a pair with both nodes online is $(1-(x/n)^2)^{i-1}*(x/n)^2$. The expected number of ping messages will be $\sum_{i=1}^{\divsrc}(1-(x/n)^2)^{i-1}*(x/n)^2$.}

If two fragments are not enough to recreate a specific fragment, it may still be possible to reconstruct it with larger number of fragments. A loose upper bound can be estimated by considering that if 2 fragments are not adequate, $k$ fragments need to be downloaded to reconstruct a fragment,\footnote{Note than in fact, often
fewer than $k$ fragments will be adequate to reconstruct a specific fragment.} which happens with
  a probability $1-p_2 = (1- (x/n)^2)^\divsrc$.%\footnote{A somewhat tighter upper bound can be obtained by using $p_{obj}-p_2$.} %During the probing process, we have also determined which nodes are online, and which are not. So no further probings are required\footnote{Assuming that the time for completing the probes is relatively small w.r.to the level of dynamics in the system.}

%The probability that exactly $y>2$ fragments will be necessary to reconstruct the specific missing fragment for a given value of $x$ is $p_y = ???$. \fixme{If we can not determine this, then we can still find an upper-bound by stating that $k$ blocks are surely enough to recreate a missing block, since $k$-blocks are even enough to recreate the object.}

Thus the expected number $D_x$ of fragments that need to be downloaded to recreate one fragment, when $x$ out of the $n$ encoded fragments are available, can be determined as:
\begin{eqnarray*}
D_x=2 & \mbox{if } x\geq(n+1)/2\\
D_x<2p_2 + k(1-p_2)& \mbox{if } x < (n+1)/2.
\end{eqnarray*}

\subsection{Overhead of recreating all missing fragments}

Above, we studied the overheads to recreate one fragment. All the missing fragments may be repaired, either in parallel (distributed in different parts of the network) or in sequence. If all missing fragments are repaired in parallel, then the total overhead $D_{prl}$ of downloading necessary fragments is: $$D_{prl} = (n-x) D_x.$$

If they are recreated sequentially, then the overhead $D_{seq}$ of downloading necessary fragments is: $$D_{seq} = \sum_{i=x}^{n} D_i.$$

In order to directly compare the overheads of repair for different repair strategies - eager, or lazy parallelized and lazy sequential repairs using SRC, as well as lazy repair with traditional erasure codes, consider that lazy repairs are triggered when a threshold $x=x_{th}$ of available encoded fragments out of $n$ is reached. If eager repair were used for SRC encoded objects, download overhead of $$D_{egr}= 2(n-x_{th})$$ is incurred. Note that, when SRC is applied, the aggregate bandwidth usage for eager repair as well as both lazy repair strategies is the same, assuming that the threshold for lazy repair $x_{th} \geq (n+1)/2$.

In the setting of traditional erasure codes, let us assume that one node downloads enough ($k$) fragments to recreate the original object, and recreates one fragment to be stored locally, and also recreates the remaining $n-x_{th}-1$ fragments, and stores these at other nodes. This leads to a total network traffic: $$D_{EClazy}= k+n-x_{th}-1.$$ Eager strategy using traditional erasure codes will incur $k$ downloads for each repair, which is obviously worse than all the other scenarios, so we ignore it in our comparison.

Note that if less than half of the fragments are unavailable, as observed in Corollary \ref{cor:2isenough}, downloading two blocks is adequate to recreate any specific missing fragment. When too many blocks are already missing, applying a repair strategy analogous to traditional erasure codes, that of downloading $k$ blocks to recreate the whole object, and then recreate all the missing blocks is logical. That is to say, the benefit of reduced maintenance bandwidth usage for SRC (as also of other recent techniques like RGC) only makes sense under a regime when not too many blocks are unavailable. Let us define $x_c$ as the critical value, such that if the threshold for lazy repair in traditional erasure codes $x_{th}$ is less than this critical value, then, the aggregate fragment transfer traffic to recreate missing blocks will be less using the traditional technique (of downloading $k$ fragments to recreate whole object, and then replenish missing fragments) than by using SRC. Recall that for $x \geq (n+1)/2$, $D_{egr}=D_{prl}=D_{seq}$. One can determine $x_c$ as follows. We need $D_{egr} \leq D_{EClazy}$, implying that
\[
 2n-2x_{c} \leq n-1+k-x_{c}\Rightarrow x_c=n+1-k.
\]
Figure \ref{fig:trafficperlostblock} shows the average amount of network traffic to transfer data from live nodes per lost encoded fragment when the various lazy variants of repair are used, namely parallel ($\gamma_{prl}$) and sequential ($\gamma_{seq}$) repairs with SRC, and (by default, sequential) repair ($\gamma_{eclazy}$) when using EC. RGCs are also shown on this figure
- see $\gamma_{MSRGC}$, where $d$ is the number of live nodes contacted during repair.

The \emph{x-axis} represents the threshold $x_{th}$ for lazy repair, such that repairs are triggered only if the number of available blocks for an object is not more than $x_{th}$. Use of an eager approach with SRC incurs a constant overhead of two fragments per lost block. Note that there are other messaging overheads to disseminate necessary meta-information (e.g., which node stores which fragment), but we ignore these in the figure, considering that the objects being stored are large, and data transfer of object fragments dominates the network traffic. This assumption is reasonable, since for small-objects, it is well known that the meta-information storage overheads outweigh the benefits of using erasure codes, and hence erasure coding is impractical for small objects.

\begin{figure}
\begin{center}
\includegraphics[scale=0.5]{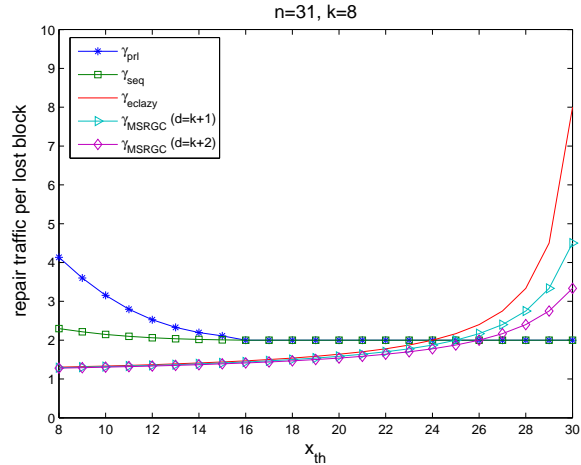}
\caption{Average traffic  normalized with $B/k$ per lost block for various choices of $x_{th}$.}
\label{fig:trafficperlostblock}
\end{center}\vspace{-6mm}
\end{figure}

There are several implications of the above observed behaviors. To start with, we note that an engineering solution like lazy repair which advocates waiting before repairs are triggered, amortizes the repair cost per lost fragment, and is effective in reducing total bandwidth consumption and outperforms SRC (in terms of total bandwidth consumption), provided the threshold of repair $x_{th}$ is chosen to be lower than $x_c$. This is in itself not surprising. However, for many typical choices of $(n,k)$ in deployed systems such as $(16,10)$ in Cleversafe \cite{cleversafe}, or $(517,100)$ in Wuala \cite{wuala}, a scheme like SRC is practical. In the former scenario, $x_c$ is too low, and waiting so long makes the system too vulnerable to any further failures (i.e., poor system health). In the later scenario, that is, waiting for hundred failures before triggering repairs seems both unnecessary, and also, trying to repair 100 lost fragments simultaneously will lead to huge bandwidth spikes.\footnote{A storage system's vulnerability to further failures, as well as spiky bandwidth usage are known problems of lazy repair strategies \cite{dattaSSS}.}

Using SRC allows for a flexible choice of either an eager or lazy (but with much higher threshold $x_{th}$) approaches to carry out repairs, where the repair cost per lost block stays constant for a wide range of values (up till $x_{th}\geq(n+1)/2$). Such a flexible choice makes it easier to also benefit from the primary advantage of lazy repair in peer-to-peer systems, namely, to avoid unnecessary repairs due to temporary churn, without the drawbacks of (i) having to choose a threshold which leads to system vulnerability or (ii) choose a much higher value of $n$ in order to deal with such vulnerability, and (iii) have spiky bandwidth usage.

\subsection{Maximal distance separability \& minimum storage point}\label{subsec:RGC}

To conclude the discussion on the cost of repair, this subsection gives some points of comparison between the now well known RGC and the newly introduced SRC.
The theory underlying regenerating codes exposes an interesting trade-off between the storage and repair bandwidth overheads for maximal distance separable codes - where the data is encoded and stored over $n$ nodes, and encoded data stored at any arbitrary $k$ of these storage nodes allows reconstruction of the whole object.

Suppose that each node has a storage capacity of $\alpha$, i.e., the size of the encoded data block stored at a node is of the size $\alpha$. When one data block needs to be regenerated, a new node contacts $d$ live nodes, and downloads $\beta$ amount of data from each of the contacted nodes (referred to as the bandwidth capacity of the connections between any node pair). By considering an information flow from the source to the data collector, a trade-off between the nodes' storage capacity and bandwidth is computed through a min-cut bound. This analysis determines two interesting constraints. Firstly, regeneration of a lost node is feasible only when at least $k$ live nodes are contacted, i.e. $d \geq k$. Secondly, it determines a trade-off curve between the storage overhead per node $\alpha$ and the bandwidth per regeneration $d \beta$. One extreme of this trade-off curve corresponds to the smallest feasible value of $\alpha$, which is $B/k$ (the other one being the smallest feasible value of $\beta$, called the minimal bandwidth repair point (MBR)). We note that this storage overhead corresponds to any optimal encoding scheme which aims to reconstruct the object using no more than $k$ encoded blocks. This point on the trade-off curve is called the \emph{minimum storage repair point} (MSR), determining the minimal bandwidth requirement for regeneration according the min-cut max-flow arguments of information flow as follows:
\[
(\alpha_{MSR},\beta_{MSR})=
\left(
\frac{B}{k},\frac{B}{k(d-k+1)}
\right).
\]
A similar minimum storage point, computed using the same type of arguments, is available for collaborative RGCs, and takes the
form
%\begin{equation}\label{eq:mscr}
\[
\alpha=\frac{B}{k},~\beta=\beta'=\frac{B}{k}\frac{1}{d-k+t}
%\end{equation}
%(\alpha_{MSR},\beta_{MSR},\beta'_{MSR})=,
\]
where $\beta'_{MSR}$ denotes the bandwidth used for cooperation and $t$ is the number of new nodes regenerating together and in cooperation with each other (thus, $t$ could be interpreted as the number of failures triggering lazy, collaborative repair).

Since encoded blocks in $\hsrc(n,k)$ codes are also of size $B/k$, a meaningful comparison is possible corresponding to the minimal storage (MSR) point. We note that firstly, HSRC achieves a $d << k$, which breaches the $d \geq k$ constraint of RGCs. Furthermore, we notice from Figure \ref{fig:trafficperlostblock} that HSRC can carry out regenerations with less bandwidth per repair than RGCs, for certain number of faults, and certain RGC parameter choices.

At first sight, this may seem counter-intuitive, given that the max-flow min-cut analysis establishes hard achievability constraints. But recall that these constraints were determined under the assumption of maximal distance separability of the resulting code. Thus to say, the significantly superior HSRC based repair performance is obtained by relaxing the MDS constraint, which, as discussed in Section \ref{sec:comapreMDS}, has marginal practical drawbacks or overheads.

We will like to add furthermore, that while the max-flow min-cut bound determines achievability constraint, it does not indicate what coding scheme may achieve the same. The currently existing RGC coding schemes in fact often do not support arbitrary values of $d$. Typical codes in literature are for $d = n-1$ (for MBR) or $d=k+1$ (for MSR), and often support only a single repair at a time, which means, for all practical purposes, the benefit of using HSRC is even stronger. This is because, even if there are regimes where RGCs do better (as may seem from Fig. \ref{fig:trafficperlostblock}), those can not in practice be reached with any currently known RGC codes.

The proposed HSRC code also does not have any hidden constraint on the underlying field size considered. This is in contrast to the implicit assumptions in RGC that both a suitable MDS erasure code and network code exist, whose existence typically relies on the ability of finding solutions to given systems of linear equations. Given a choice of parameters $(n,k)$, an MDS code does not exist for every field size. As for the system of linear equations, solutions tend to exist when the field size becomes big enough. Bounds on the field size for a network code to exist is a topic of ongoing study in the area of network coding.

Finally, it is worth pointing out the significance of the choice of $d$. A typical value of $k$ as used with Wuala is
about 100, this means that the number of nodes contacted for one repair is more than a hundred, whereas HSRC in contrast can repair one node by communicating with only two nodes. 

%*******************************************************************************************************%
%
% PARALLEL REPAIR
%
%*********************************************************************************************************%

\section{Other practical implications: A qualitative discussion}
So far we have demonstrated that by embracing the self-repairing properties, significant reduction in the aggregate bandwidth used for repairs is achieved. This overhead reduction is with respect to not only traditional erasure codes, but under certain regimes, also in comparison to other `optimal' storage centric codes such as regenerating codes. While repair bandwidth overhead reduction was the explicit motivation for designing self-repairing codes, the code properties have another natural and desirable consequence - in presence of multiple failures, SRC allows for fast and parallel repairs. We will elaborate this property with an example in next subsection.

Apart the lack of maximum distance separability (MDS) property, another possible critique of HSRC is that it is not a systematic code, that is, pieces of the object are not present uncoded. We have already argued that the original design goal of the self-repairing properties themselves are mutually exclusive with the MDS property, but based on quantitative arguments (see Section \ref{sec:comapreMDS}), we concluded that this has marginal impact on the resilience or storage overheads of the proposed code. We note that the systematic code property is not necessarily and completely exclusive of the cardinal self-repairing code properties. Indeed, a different construct of self-repairing code \cite{OD2-11} based on very different mathematical properties, that of projective geometry, has been shown to have systematic-like features. Later in this section, we provide some qualitative arguments on why the lack of systematic property may not have significant implications in terms of decoding, precisely because of the strong self-repairing properties; besides highlighting that some real life storage system deployment even intentionally avoid using systematic encoded blocks in order to enhance security.

\subsection{Fast \& parallel repairs with HSRC}

We observed in the previous section that while SRC is effective in significantly reducing bandwidth usage to carry out maintenance of lost redundancy in coding based distributed storage systems, depending on system parameter choices, an engineering solution like lazy repair while using traditional EC may (or not) outperform SRC in terms of total bandwidth usage, even though using lazy repair with EC entails several other practical disadvantages.

A final advantage of SRC which we further showcase next is the possibility to carry out repairs of different fragments independently and in parallel (and hence, quickly). If repair is not fast, it is possible that further faults occur during the repair operations, leading to both performance deterioration as well as, potentially, loss of stored objects.

Consider the following scenario for ease of exposition: Assume that each node in the storage network has an uplink/downlink capacity of 1 (coded) fragment per unit time. Further assume that the network has relatively (much) larger aggregate bandwidth. Such assumptions correspond reasonably with various networked storage system environments.
%In peer-to-peer systems, individual users have orders of magnitude less bandwidth w.r.to the total bandwidth in the network. In data centers, edge nodes typically have 1GE connections, while the aggregation and core layers have 10GE connections.

Consider that for the Example \ref{ex:complete}, originally $n$ was chosen to be $n_{max}$, that is to say, a $\hsrc(15,3)$ was used. Because of some reasons (e.g., lazy repair or correlated failures), let us say that seven encoded fragments, namely $p(1),\ldots,p(w^6)$ are unavailable while fragments $p(w^7)...p(w^{15})$ are available. Table \ref{tab:enumeratereconstruction} enumerates possible pairs to reconstruct each of the missing fragments.

\begin{table}
\begin{tabular}{|c|c|}
  \hline
  % after \\: \hline or \cline{col1-col2} \cline{col3-col4} ...
  fragment & suitable pairs to reconstruct\\
  \hline
  $p(1)$   & $(p(w^7),p(w^9))$;$(p(w^{11}),p(w^{12}))$\\
  $p(w)$   & $(p(w^7),p(w^{14}))$;$(p(w^8),p(w^{10}))$\\
  $p(w^2)$ & $(p(w^7),p(w^{12}))$;$(p(w^9),p(w^{11}))$;$(p(w^{12}),p(w^{10}))$\\
  $p(w^3)$ & $(p(w^8),p(w^{13}))$;$(p(w^{10}),p(w^{12}))$\\
  $p(w^4)$ & $(p(w^9),p(w^{14}))$;$(p(w^{11}),p(w^{13}))$\\
  $p(w^5)$ & $(p(w^7),p(w^{13}))$;$(p(w^{12}),p(w^{14}))$\\
  $p(w^6)$ & $(p(w^7),p(w^{10}))$;$(p(w^8),p(w^{14}))$\\
  \hline
\end{tabular}
\vspace{1mm}
\caption{Scenario: Seven fragments $p(1),\ldots,p(w^6)$ are missing}
\label{tab:enumeratereconstruction}\vspace{-8mm}
\end{table}

A potential schedule to download the available blocks at different nodes to recreate the missing fragments is as follows: In first time slot, $p(w^{11})$, $p(w^{10})$, $p(w^{12})$, nothing, $p(w^{13})$, $p(w^{7})$ and $p(w^{8})$ are downloaded separately by seven nodes trying to recreate each of $p(1),\ldots,p(w^6)$ respectively. In second time slot $p(w^{12})$, $p(w^{8})$, $p(w^{7})$, $p(w^{10})$, $p(w^{11})$, $p(w^{13})$ and $p(w^{14})$ are downloaded. Note that, besides $p(w^3)$, all the other missing blocks can now already be recreated. In third time slot, $p(w^{12})$ can be downloaded to recreate it. Thus, in this example, six out of the seven missing blocks could be recreated within the time taken to download two fragments, while the last block could be recreated in the next time round, subject to the constraints that any node could download or upload only one block in unit time.

\begin{table*}
\center
\begin{tabular}{|c|c|c|c|c|c|c|c|}
  \hline
  % after \\: \hline or \cline{col1-col2} \cline{col3-col4} ...
  node & $p(w^0)$ & $p(w^1)$ & $p(w^2)$ & $p(w^3)$ & $p(w^4)$ & $p(w^5)$ & $p(w^6)$ \\
  \hline
  Time 1 & $p(w^7)$ & $p(w^8)$ & $p(w^9)$ & $p(w^{13})$ & $p(w^{11})$ & $p(w^{12})$ & $p(w^{10})$ \\
  Time 2 & $p(w^9)$ & $p(w^{10})$ & $p(w^{11})$ & $p(w^8)$ & $p(w^{13})$ & $p(w^{14})$ & $p(w^7)$ \\
  \hline
\end{tabular}
\end{table*}

Even if a full copy of the object (hybrid strategy \cite{Liskov}) were to be maintained in the system, with which to replenish the seven missing blocks, it would have taken seven time units. While, if no full copy was maintained, using traditional erasure codes would have taken at least nine time units.

This example demonstrates that SRC allows for fast reconstruction of missing blocks. Orchestration of such distributed reconstruction to fully utilize this potential in itself poses interesting algorithmic and systems research
challenges which we intend to pursue as part of future work.
%Such work can additionally take into account network topology to determine schemes to optimally place the encoded fragments.

\subsection{On HSRC not being a systematic code}
\label{subsec:syst}

One can immediately read partial contents of the stored object from systematic encoded blocks. This has both
advantages - for object retrieval, and disadvantages - in terms of security.

The fact that HSRC is not a systematic code (every encoded fragment contains information about every piece of
data) makes the object retrieval more costly: this is basically decoding. However, unlike in a classical communication
scenario where decoding has to be done with whatever corrupted data is available, the situation is different here:
thanks to the repair property, it is possible to have a privileged set of encoded fragments to be used for decoding, and
if some are missing during object retrieval, they can be repaired first. The set of encoded fragments to decode can be
chosen for being closer to systematic blocks than random blocks, or precomputed computations can be made available to ease the decoding.

Having a systematic code can cause security threats.
If the data is to be stored securely over untrusted storage nodes - as may be the case in peer-to-peer systems, but also in cloud/data-center environments which may be partially compromised by either malicious insiders or hackers, then one would need to apply cryptographic techniques on the original object, and then store the encrypted object. However, if non-systematic blocks are used, then individual encoded blocks do not reveal any information. Instead, one would need access to enough ($k$) encoded blocks before being able to read any (and in fact, the whole) content. Thus, use of non-systematic encoded blocks provides some level of protection, similar in spirit to threshold cryptography - in that, even if a small subset of the storage nodes are compromised, it does not reveal any content. While the level of security is not at par with encryption/decryption by the data owner using a secret key, this nevertheless provides an intermediate degree of protection, but without the additional cost of encryption/decryption, instead amortizing on the encoding/decoding overheads. In fact, Cleversafe employs this principle for protection according to their corporate website, providing an example of distributed storage scenario, where use of systematic encoded blocks are deemed undesirable, and in fact, non-systematic blocks are preferred.

This security argument, coupled with our arguments above on how reasonably efficient decoding is possible, thus mitigating the adversarial impact of the lack of systematic property of HSRC, indicates that HSRC is practical for many storage centric application scenarios.

\section{Conclusion}
We propose a new family of codes, called self-repairing codes, which are designed by taking into account specifically the characteristics of distributed networked storage systems. Self-repairing codes achieve excellent properties in terms of maintenance of lost redundancy in the storage system, most importantly: (i) low-bandwidth consumption for repairs (with flexible/somewhat independent choice of whether an eager or lazy repair strategy is employed), (ii) parallel and independent (thus very fast) replenishment of lost redundancy. When compared to erasure codes, the self-repairing property is achieved by marginally compromising on static resilience for same storage overhead, or conversely, utilizing marginally more storage space to achieve equivalent static resilience. This paper provides the theoretical foundations for SRCs, and shows its potential benefits for distributed storage. There are several algorithmic and systems research challenges in harnessing SRCs in distributed storage systems, e.g., design of efficient decoding algorithms, or placement of encoded fragments to leverage on network topology to carry out parallel repairs, which are part of our ongoing and future work.

%SIMULATIONS:
%* range: one object = up to 64 megabytes
%* example from wala: (517,100)=(n,k) code
%* R-S: example from wiki, typically k=223 over F8 (=8 bit symbols)

%************************************************************************%
%
% ACK
%
%***********************************************************************%

\section*{Acknowledgment}

F. Oggier's research  for this work has been supported by the Singapore National Research Foundation under Research Grant NRF-CRP2-2007-03. A. Datta's research for this work has been supported by AcRF Tier-1 grant number RG 29/09.

%********************************************************************%
%
% BIBLIO
%
%********************************************************************%


\begin{thebibliography}{99}
%
\bibitem{TotalRecall}
R. Bhagwan, K. Tati, Y. Cheng, S. Savage, G. Voelker, ``Total recall: System support for automated availability management'', {\em Networked Systems Design and Implementation (NSDI)}, 2004.
%
\bibitem{cleversafe}
{\em http://www.cleversafe.org/dispersed-storage/configurations}
%
\bibitem{netcod}
A. G. Dimakis, P. Brighten Godfrey, M. J. Wainwright, K. Ramchandran,
``The Benefits of Network Coding for Peer-to-Peer Storage Systems'', {\em Workshop on Network Coding, Theory, and Applications (Netcod)}, 2007.
%
\bibitem{UCBsubm}
A. G. Dimakis, P. Brighten Godfrey, Y. Wu, M. O. Wainwright, K. Ramchandran,
``Network Coding for Distributed Storage Systems'', available online
at {\em http://arxiv.org/abs/0803.0632}.
%
\bibitem{dattaP2P}
A. Datta, K. Aberer, ``Internet-Scale Storage Systems under Churn -- A Study of the Steady-State using Markov Models'', {\em Peer-to-Peer Computing (P2P)}, 2006.
%
\bibitem{hierarchical}
A. Duminuco, E. Biersack, ``Hierarchical Codes: How to Make Erasure Codes
Attractive for Peer-to-Peer Storage Systems'', {\em Peer-to-Peer Computing (P2P)}, 2008.
%
\bibitem{biersackRGC}
A. Duminuco, E.W. Biersack, ``A Practical Study of Regenerating Codes for Peer-to-Peer Backup Systems'', {\em Intl. Conference on Distributed Computing Systems (ICDCS)}, 2009.
%
\bibitem{KLS}
A.-M. Kermarrec, N. Le Scouarnec, G. Straub, ``Beyong Regenerating Codes",
Technical Report, 10 Sept 2010.
%
\bibitem{dattaSSS}
X. Liu, A. Datta, ``Redundancy Maintenance and Garbage Collection Strategies in Peer-to-Peer Storage Systems'', {\em Intl. Symposium on Stabilization, Safety, and Security of Distributed Systems (SSS)} 2009.
%
\bibitem{wuala}
D. Grolimund, ``Wuala - A Distributed File System'', Google Tech Talk {\em http://www.youtube.com/watch?v=3xKZ4KGkQY8}
%
\bibitem{pyramid}
C. Huang, M. Chen, J. Li, ``Pyramid Codes: Flexible Schemes to Trade Space for Access Efficiency in Reliable Data Storage Systems'' , {\em NCA 2007}
%
\bibitem{OD11}
F. Oggier, A. Datta, ``Self-repairing Homomorphic Codes for Distributed Storage Systems'', {\em INFOCOM 2011}.
%
\bibitem{OD2-11}
F. Oggier, A. Datta, ``Self-Repairing Codes for Distributed Storage - A Projective Geometric Construction'',
available at arXiv abs/1105.0379: (2011)
%
\bibitem{vijaykumarallerton}
K. V. Rashmi, N. B. Shah, P. V. Kumar and K. Ramchandran, ``Explicit Construction of Optimal Exact Regenerating Codes for Distributed Storage'', {\em Allerton Conf. on Control, Computing and Comm.} 2009.
%
\bibitem{ReedSolomon}
I. S. Reed and G. Solomon, ``Polynomial {C}odes {O}ver {C}ertain {F}inite {F}ields'',
{\em Journal of the Society for Industrial and Appl. Mathematics}, no 2, vol. 8, SIAM, 1960.
%
\bibitem{Liskov}
R. Rodrigues and B. Liskov, ``High Availability in DHTs: Erasure Coding vs. Replication'', {\em Workshop on Peer-to-Peer Systems (IPTPS)} 2005.
%
\bibitem{Shum-ICC}
K. W. Shum, ``Cooperative Regenerating Codes for Distributed Storage Systems",
{\em ICC 2011}, available at arXiv:1101.5257v1.
%
\end{thebibliography}
\end{document}